\newcommand{\includegraphicsds}[2]{\raisebox{-.33\height}{\includegraphics[width=#1\textwidth]{#2}}}
\tikzset{
main node/.style={inner sep=0,outer sep=0},
label node/.style={inner sep=0,outer ysep=.2em,outer xsep=.4em,font=\scriptsize,overlay},
strike out/.style={shorten <=-.2em,shorten >=-.5em,overlay}
}
  \newcommand{\miniscule}{\@setfontsize\miniscule{4}{5}}%  
  \newcommand{\miniscule}{\@setfontsize\miniscule{5}{6}}%  
  \newcommand{\nano}{\@setfontsize\miniscule{3.5}{4.5}}%  
  \newcommand{\nano}{\@setfontsize\miniscule{4.5}{5.5}}%  
  \newcommand{\nano}{\@setfontsize\miniscule{4.5}{5.5}}% 
    \newcommand{\totimes}{\boxtimes}
\newcommand{\balita}{\raisebox{1.8pt}{\text{ \nano$\bullet$\hspace{1.7pt} }}}
\newcommand{\includegraphicsd}[2]{\raisebox{-.475\height}{\includegraphics[width=#1\textwidth]{#2}}}
\numberwithin{equation}{section}
\newtheoremstyle{mytheoremstyle} % name
    {10pt}                    % Space above
    {8pt}                    % Space below
    {\itshape}                   % Body font
    {}                           % Indent amount
    {\scshape}                   % Theorem head font
    {.}                          % Punctuation after theorem head
    {.5em}                       % Space after theorem head
    {}  % Theorem head spec (can be left empty, meaning ‘normal’)
\theoremstyle{mytheoremstyle}
\newtheorem{theorem}{Theorem}[section]
 \newtheorem{lemma}[theorem]{Lemma}
 \newtheoremstyle{definition} % name
    {8pt}                    % Space above
    {5pt}                    % Space below
    {}                   % Body font
    {}                           % Indent amount
    {\scshape}                   % Theorem head font
    {.}                          % Punctuation after theorem head
    {.5em}                       % Space after theorem head
    {}  % Theorem head spec (can be left empty, meaning ‘normal’)
 \theoremstyle{definition}
 \newtheorem{definition}[theorem]{Definition}
 \newtheorem{example}[theorem]{Example}
 \newtheorem*{acknowledgements}{Acknowledgements}
 \newtheorem{remark}[theorem]{Remark}
\newcommand{\includegraphicswextra}[3]{\raisebox{-.#3\height}{\includegraphics[width=#1\textwidth]{#2}}}
 \newenvironment{proof*}[1]{%
 \begin{proof}}
 {\end{proof}} 
 \definecolor{VerdeFH}{HTML}{009374}
\newcommand{\eff}{^{\mathrm{eff}}_N}  
\newcommand{\uv}{^{\textsc{\scriptsize uv}}_{N}} 
\newcommand{\Nmax}{N_{ \infty }} 
\newcommand{\uvmax}{^{\textsc{\tiny uv}}_{N_{\infty}}} 
\newcommand{\Lebesgue}{_{\textsc{\tiny Leb}}} 
\newcommand{\inter}{^{\textsc{\tiny Int}}}
  \newcommand*\textmathversion{\csname textmv@\math@version\endcsname}
  \newcommand*\textmv@normal{m}
  \newcommand*\textmv@bold{b}
    \numberwithin{equation}{section}
    \renewcommand{\and}{\mbox{and}}
    \newcommand{\STr}{\mathrm{STr}}
    \newcommand{\mtr}[1]{\mathrm{#1}}
    \newcommand{\A}{\mathcal{A}}
    \newcommand{\dif}[1]{\mathrm{d}#1}
    \newcommand{\re}{\mathbb{R}}
    \newcommand{\diag}{\mtr{diag}}
    \renewcommand{\H}{\mathcal{H}}
    \newcommand{\C}{\mathbb{C}}
    \newcommand{\ee}{\mathrm{e}}
    \newcommand{\inv}{^{-1}}
    \newcommand{\N}{\mathbb{N}}
    \newcommand{\hp}[1]{^{(#1)}}
    \newcommand{\where}{\mbox{where}\,\,}
     \DeclareMathOperator{\Day}{\mathscr{D}}
      \DeclareMathOperator{\Hess}{Hess}
    \DeclareMathOperator{\Tr}{Tr}
    \DeclareMathOperator{\im}{im}
    \newcommand{\TrN}{\Tr_{N}}
    \newcommand{\TrA}{\Tr_{\A_n}}
\newcommand{\Cfree}[1]{\C_{\langleb #1\rangleb}}
\newcommand{\Cn}{\Cfree{n}}
\newcommand{\CnN}{\C_{\langleb n\rangleb, N}}
\newcommand{\MN}{M_N(\C)}
\newcommand{\MA}{M_n(\A_n)}
\newcommand{\itemb}{\item[$\balita$]}
\newcommand{\eeqref}[1]{Eq. \eqref{#1}}
\newcommand{\texteqref}[1]{\text{Eq.} \eqref{#1}}
\colorlet{tableheadcolor}{gray!19} % Table header colour = 25% gray
\colorlet{tablerowcolor}{gray!10} % Table row separator colour = 10% gray
   \let\langleb=\langle
   \let\rangleb=\rangle
\begin{document}

\title[The algebra of functional renormalization: a ribbon graph derivation]{A ribbon graph derivation of\\ the algebra of functional 
renormalization for \\ random multi-matrices with multi-trace interactions}
 
 \author{Carlos I. P\'erez-S\'anchez}  
 
  \address{Institute for Theoretical Physics, University of Heidelberg \newline \indent 
  Philosophenweg 19, 69120 Heidelberg, Germany, European Union   \newline \indent 
 \hspace{.0cm}\& \newline \indent 
  Faculty of Physics, University of Warsaw  \newline \indent  
  ul. Pasteura 5, 02-093, Warsaw, Poland, European Union   
  }
\email{perez@thphys.uni-heidelberg.de}

\keywords{Functional Renormalization, random matrices, non-commutative algebra, multi-matrix models, ribbon graphs}

% \begin{minipage}{.80\textwidth} \hspace{6cm}
\begin{abstract}
We focus on functional renormalization for ensembles of several (say $n\geq 1$) random matrices, whose potentials include multi-traces, to wit, the probability measure contains factors of the form $ \exp[-\mathrm{Tr}(V_1)\times\ldots\times \mathrm{Tr}(V_k)]$ for certain noncommutative polynomials $V_1,\ldots,V_k\in \mathbb{C}_{\langleb n \rangleb}$ in the $n$ matrices. This article shows how the ``algebra of functional renormalization''---that is, the structure that makes the renormalization flow equation computable---is derived from ribbon graphs, only by requiring the one-loop structure that such equation (due to Wetterich) is expected to have. Whenever it is possible to compute the renormalization flow in terms of $\mathrm U(N)$-invariants, the structure gained is the matrix algebra $M_n( \mathcal{A}_{n,N}, \star ) $ with entries in $\mathcal{A}_{n,N}=(\mathbb{C}_{\langleb n \rangleb} \otimes \mathbb{C}_{\langleb n \rangleb} )\oplus( \mathbb{C}_{\langleb n \rangleb} \boxtimes \mathbb{C}_{\langleb n \rangleb})$, being $\mathbb{C}_{\langleb n \rangleb} $ the free algebra generated by the $n$ Hermitian matrices of size $N$ (the flowing random variables) with multiplication of homogeneous elements in $\mathcal{A}_{n,N}$ given, for each $P,Q,U,W\in\mathbb{C}_{\langleb n \rangleb}$, by \begin{align*}(U \otimes W) \star ( P\otimes Q) &= PU \otimes WQ \,, & (U\boxtimes W) \star ( P\otimes Q) &=U \boxtimes PWQ \,, \\(U \otimes W) \star ( P\boxtimes Q) &= WPU \boxtimes Q \,,\ & (U\boxtimes W) \star ( P\boxtimes Q) &= \mathrm{Tr} (WP) U\boxtimes Q \,,\end{align*} which, together with the condition $(\lambda U) \boxtimes W = U\boxtimes (\lambda W) $ for each complex $\lambda$, fully define the symbol $\boxtimes$.\end{abstract} 
 \fontsize{10.55}{13.45}\selectfont    
%  \normalsize
\maketitle

\section{Introduction and motivation} \label{sec:intro}
% Quantum theories flow with energy, for instance, 
By the \textit{Functional Renormalization Group} (FRG) 
physicists refer to a certain flow in the renormalization time $t$, usually the logarithm $t=\log k$ of the energy scale $k$, which in the ``nonperturbative'' \cite{Berges:2000ew} setting is governed by 
Wetterich equation \cite{Wetterich} 
\begin{align*}``\partial_t \Gamma _{k}[\phi]= \frac12 \STr \bigg(\frac{\partial_t R_{k} }{ \Hess \Gamma_{k}[\phi]+R _{k}}\bigg)\text{''}\,. \end{align*}
This is satisfied by the ``effective action $\Gamma_k[\phi]$,
infrared-regulated by $R_k$ up to the energy scale $k$, on some space of fields $\phi$'' 
(quotation marks, since mathematical details follow for the system of our interest).
This article addresses functional renormalization for ensembles of $n$-tuples of 
Hermitian matrices; the particular type of ensembles we analyze 
have clear physical motivations (Secs. \ref{sec:ncpol} and \ref{sec:multitr}).  

While there is no better way to compute it, the denominator in the right hand side of Wetterich equation is a Neumann expansion (geometric
series) in ``$\Hess \Gamma_k [\phi] /R_k$'', essentially, the  Hessian  of the fields. 
For matrix ensembles, this Hessian is an object of four indices, two
from each of the two derivatives. The question 
is which is the meaning of the product $\star$ implied in powers $(\Hess \Gamma_k[\phi])^{\star m} $  of the Hessian; we call the  
algebra defined by such product the \textit{algebra of functional renormalization}\footnote{We should probably write the \textit{functional renormalization group algebra},
but ``group algebra'' can be confused with $\C[ G]$, for a group $G$; 
or \textit{functional renormalization algebra}, which would suggest,
that the renormalization group (which is none) is upgraded to an algebra.}. \par  Of course, this question can be answered directly by looking at 
the proof of Wetterich equation; for the field theory in question, see  \cite{FRGEmultimatrix}. For multi-matrix
ensembles with probability measures
defined, as is usual, solely in terms of single traces of matrix polynomials, 
part of the answer relies on observing that 
the Hessian is spanned,  as  $\sum_\alpha F_\alpha \otimes G_\alpha $, by
couples of noncommutative polynomials $F_\alpha,G_\alpha$. The (so far, unsurprising) answer is that powers of the Hessian are obtained by the product rule 
\begin{align*}
( U\otimes W)\star  (P \otimes Q) = PU  \otimes WQ   \,.
\end{align*} 

% \end{document}

\begin{figure}[htb]
 \begin{align*} {\underbrace{\includegraphicsd{.230}{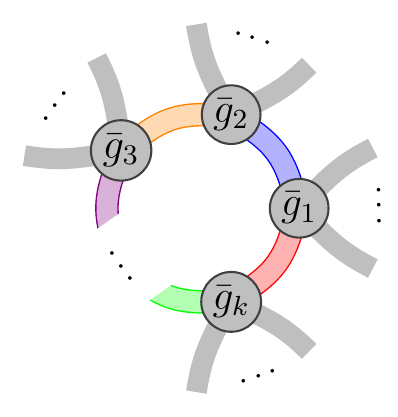}}_{\substack{\text{A one-loop diagram in a simple case}\\ \text{where ``all legs are pointing outwards''} }}} \hspace{1.5995ex}  {  \stackrel{ \substack{\text{\tiny } \\ \text{ }}}{\mapsto} \hspace{2.995ex} \underbrace{
\includegraphicsd{.230}{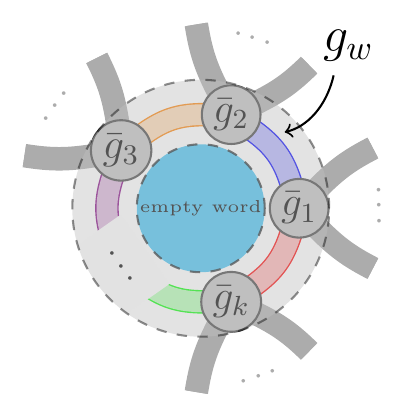}}_{\text{$ {\color{cyan!80!black}1_N\,}\otimes $ cycl. outer word $w$ $(w\in \Cfree{n}$)}}} 
\end{align*}\centering 
\caption{\label{fig:ideal} The colored legs correspond with $ M_n$-block 
entries $\Hess_{\color{PineGreen}a\color{black},\color{red!80!black}b} \Gamma$
of the effective action $\Gamma$.   \textit{Left:} Unrenormalized interactions $\bar g_i$ appearing in a $k$-th
power of the Hessian. \textit{Right:} The contribution to the $\beta_w$-function, $w$ formed by reading off clockwise the legs}
\end{figure}

Notice the ``inversion'' in one of the first tensor-factors, which starts to 
reflect the inner boundary and the outer one of the one-loop, 
relevant in this note (see Figures \ref{fig:ideal} and \ref{fig:notideal}). Interestingly,  the incorporation 
of double traces yields a less trivial answer, 
for a ``second product'' appears (if one wants, 
a twisted tensor product) that  
also satisfies bilinearity $( z P) \totimes Q = P \totimes (zQ )$, $z\in\C$, but which differs from the usual tensor product 
only in the way one multiplies it with another element, $U\otimes W$ or $T\totimes V$.
From interactions of two (or more) traces, then 
the Hessian of the effective action turns out to be spanned by noncommutative 
polynomials in a more general position\footnote{These, moreover, might 
have traces as factors, but being these scalar functions 
of the matrices that will not be derived again, this is irrelevant} $\sum_{\alpha} F_\alpha \otimes G_\alpha 
+ \sum_\rho H_\rho \totimes I _\rho$.
The product reads 
\begin{subequations}\label{firstFRG}%
\begin{align}%
(U \otimes W+ Y\totimes Z) \star ( P\otimes Q)  &=  PU \otimes WQ + Y \totimes PZQ \,, \\
% (Y\totimes Z) \star ( P\otimes Q)  &= Y \totimes PZQ  \,, \\
(U \otimes W+Y\totimes Z) \star ( T\totimes V)  &=  WTU \totimes V+ \TrN (WT) U\totimes V\,. 
% (Y\totimes Z) \star ( T\totimes V) &= \TrN (WT) U\totimes V\,.
% \\ 
% (U \otimes W+ Y\totimes Z) \star ( P\otimes Q +T\totimes V)  &=  PU \otimes WQ   + Y \totimes PZQ \\& + 
%   WTU \totimes V+ \TrN (WT) U\totimes V
%  \,, \nonumber  
% % (Y\totimes Z) \star ( P\otimes Q)  &= Y \totimes PZQ  \,, \\
% (U \otimes W+Y\totimes Z) \star ( T\totimes V)  &= WTU \totimes V+ \TrN (WP) U\totimes Q\,. 
\end{align}\end{subequations}%
The aim of this article is to prove,
using graphs, that the sole assumption
that the contributions to the rhs of Wetterich  equation have \textit{all} a ``one-loop
structure'' implies that the rhs of Wetterich equation 
satisfies \eeqref{firstFRG}. Next, we justify  the appearance of 
noncommutative (nc) polynomials and 
of double traces, relating both with other theories (in Secs. \ref{sec:ncpol} and \ref{sec:multitr}, respectively). 
In Section \ref{sec:terms_mainstatement}, before presenting the precise statement,
we give a short, but self-contained account of the ribbon graph theory 
needed to prove, in Section \ref{sec:proof}, the main statement (Thm. \ref{thm:main}).
Appendix \ref{app:RN} explains the construction of the infrared-regulated effective action.

\subsection{The origin of the noncommutative polynomials and potential applications}\label{sec:ncpol}
Ensembles of several matrices with probability 
laws given by ordinary (commutative) potentials
are extensively studied in high energy physics. 
An important family of models solved by Eynard-Orantin \cite{Eynard:2005iq}, using their topological recursion, is the \textit{two-matrix model}, which refers to measures $\dif\mu$ on $\H_N^2$ of the form
\begin{equation}
 \label{factorizable}
\dif\mu(A,B)= \exp[-\TrN(AB)] \exp\big\{-\TrN [V_1 (A)\big\} (\dif A)\Lebesgue 
\times \exp\big\{-\TrN [V_2 (B)]\big\} (\dif B)\Lebesgue\,.  
\end{equation}
Modulo the first factor, this is still a product of measures, each of which on 
 the space $\H_N$ of $N\times N$ Hermitian matrices.
Here, $V_1(x) $ and $V_2(x)$ are polynomials in a real variable $x$ 
and $\TrN(X)=\sum_{i=1}^N X_{i,i}$ is the unnormalized trace. We will keep this 
notation in the sequel. \par 
The simplest addressed and (using the character expansion method \cite{KazakovABAB}) solved model 
with a genuinely noncommutative law, generalizing \eeqref{factorizable}, is the $ABAB$-model
with measure \begin{align} \nonumber
\qquad\dif \mu (A,B) &= \exp\big\{-N \TrN (g_{A^4} A^4+ g_{B^4} B^4 + g_{ABAB} ABAB) \big\} \dif \gamma (A, B) \\&  =:\exp(-S\inter[A,B])\dif \gamma (A, B) , \label{ABABmeasure}
\end{align}
where  
\begin{equation}
 \dif\gamma(A,B)= \exp\Big\{-\frac{N}{2}\TrN (A^2+B^2)\Big\} (\dif A)\Lebesgue (\dif B)\Lebesgue  
\end{equation} is the product Gaussian measure on $\H_N^2$.
The action $S$ that defines the probability measure $\dif\mu=\exp(-S[A,B])(\dif A)\Lebesgue (\dif B)\Lebesgue$ is
the \textit{bare action}.
Hermitian ensembles with wildly non-factorizable measures, as those relevant in this paper, generalizing
\eqref{ABABmeasure}, are studied in
free probability \cite{GuionnetFreeAn}. \par 
A more recent  application of nc polynomial interactions concerns 
ensembles of Dirac operators  \begin{equation}
                        \label{diracens}
\mathcal Z=\int_{\mtr{Dirac}} \exp[-S(D) ] \dif D
                              \end{equation} which aim at the quantization of the spectral action $S(D)=\Tr f(D) $ in 
noncommutative geometry. This problem was posed since \cite[\S 19]{ConnesMarcolli}  
and finite approximations to smooth geometries that allow to make precise 
sense of the partition function \eqref{diracens} recently reawakened interest in the problem  \cite{GlaserSternZwei,MultimatrixYMH,Khalkhali:2021eso}.
Random (finite) noncommutative geometry was first approached with 
 Monte-Carlo simulations for spectral observables
 of ensembles of finite rank Dirac operators $D$ of geometries of arbitrary signature,
 say, with $p$ plus and $q$ minus signs. These finite-dimensional 
spectral triples based on the matrix algebra $M_N(\C)$ are known as fuzzy or matrix geometries. It was algorithmically convenient \cite{BarrettGlaser}---and with  
combinatorial arguments systematically
possible \cite{SAfuzzy}---to parametrize 
the Dirac operators (matrices of size $ kN^2 \times kN^2$, with $k=2^{p+q-1}$) in terms of smaller matrices. After solving the fuzzy spectral triple axioms \cite{BarrettMatrix}, Dirac operators take the form  \begin{align*}D=\sum_{\mu=1}^p \gamma^{\mu}\otimes  [X_\mu, \balita ]+ \sum_{\mu=p+1}^{p+q} \gamma^{\mu}\otimes \{X_\mu, \balita \} + \text{higher $\gamma$-products}\,,   \end{align*}
now parametrized in terms of 
commutators and anti-commutators of $N\times N$  matrices.
Tracing the powers of $D$ yields nc polynomial interactions 
spanned by Hermitian $X_\mu$ ($1\leq \mu \leq p$) and anti-Hermitian
($1\leq \mu-q \leq p$) matrices\footnote{Further matrices 
appear in all higher (odd) orders, but due to the Clifford algebra $ \{\gamma^\mu, \gamma^\nu\}=2 \eta^{\mu\nu} 1_{k\times k}$  the expansion terminates
at some parametrizing matrix $X_n$. Also the trace disperses the $\gamma$-matrices but these are important, since they determine the coefficients in the nc polynomials in $X_1,\ldots,X_n$.} (and those appearing with higher-degree products 
of gammas) as well as double traces, cf. \eeqref{measures} below.
\par

\begin{figure}%[htb]
 \begin{align*} {\underbrace{\includegraphicsd{.230}{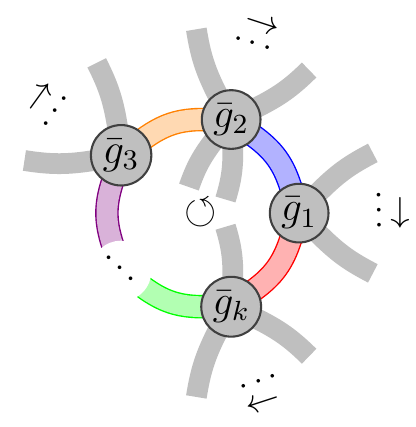}}_{\text{General situation (yet, without multi-traces)}}} \hspace{4.995ex}  {  \stackrel{ \substack{\text{\tiny} \\ \text{ }}}{\mapsto} \hspace{1.995ex} \underbrace{
\raisebox{-.52\height}{ 
\includegraphics[width=.17\textwidth]{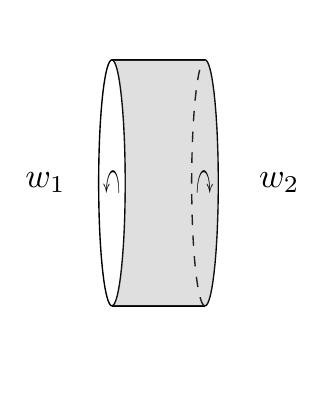}} }_{\text{inner word $w_1$  $\otimes $ outer word $w_2$}}} 
\end{align*}\centering 
\caption{\label{fig:notideal} How the one-loop structure of the FRG is encoded in $M_n(\A_n,\star)$.   
\textit{Left:} Unrenormalized interactions $\bar g_i$ appearing in a $k$-th
power of the Hessian. \textit{Right:} Unlike Figure \ref{fig:ideal}, 
this situation leads to a cylindric topology. Each word $w_1$ and $w_2$ 
distributed at the boundary is oriented in a consistent way with an 
orientation of the cylindric surface (determined by the cyclic clockwise order 
in the interaction vertices).}
\end{figure}
Applications of nc polynomial interactions were relevant for a better understanding of the Temperley-Lieb algebra\footnote{I thank Bertrand Eynard for pointing out the Temperley-Lieb algebra in the context of nc polynomial matrix interactions.}. From a  Temperley-Lieb vertex $\mathcal B$, i.e. a rooted, planar chord diagram 
one obtains a nc polynomial by distributing matrices $X_{l_1},\ldots,X_{l_{2r}}$ at the chord ends, and summing 
over the remaining indices after placing $r$ Kronecker deltas, namely a
$\delta_{l_t}^{l_s}$
for each chord joining the $t$-th node (clockwise from the rooting $*$) 
with the $s$-th node. For instance, 
\[
\mathcal B= \hspace{-.2 cm} \includegraphicsd{.102}{TemperleyLiebVertex}   \hspace{ -.0 cm}  \mapsto    \hspace{ .2 cm} \mathcal B(X_1,\ldots, X_n) = \sum_{a,b,c,d,e=1}^n X_a X_b X_c^2 X_d^2 X_b X_a X_e^2  \,.
\]
Nc polynomial matrix interactions 
are also auxiliary in the description of more general planar algebras  \cite{Guionnet:2010qg} and $O(\mathfrak n)$-loop models. 

\subsection{On multi-trace interactions}\label{sec:multitr}
 
We will see later that not including multi-trace
interactions in renormalization is unnatural (since generic 
radiative corrections include more traces than the  
bare action did). This short section mentions theories that 
contain multi-traces even before addressing renormalization (whenever possible).
 \par

\begin{itemize}\setlength\itemsep{.64em}%

\itemb  \textit{Dirac ensembles}
 always yield double trace interactions (then 
 renormalization creates even more traces). The (bare) Dirac ensemble measure  
 is of the type
\begin{equation} \label{measures}
 \qquad\dif{\mu} (X_1,\ldots, X_n )= \exp\big\{ -N \TrN (P)  -  \TrN^{\otimes 2 }  (Q_{(1)}\otimes Q_{(2)} ) \big\} \dif (X_1,\ldots, X_n )\Lebesgue
\end{equation}
for $P$, $Q_{(1)}$ and  $Q_{(2)}$ also\footnote{This 
notation has been inspired by Sweedler notation in quantum groups,
and avoids to write sums like $Q_{(1)}\otimes Q_{(2)}=\sum_{\alpha} Q_{1,\alpha }\otimes  Q_{2,\alpha}$ where each $Q_{1,\alpha},Q_{2,\alpha} \in \Cn$} nc polynomials in 
the matrices $X_1,\ldots, X_n$ and $
\dif (X_1,\ldots, X_n )\Lebesgue$ is the product Lebesgue measure, now on $\H_N^n$. 
Even though $P$ has an extra factor of $N$ 
with respect to the double trace, observe that the latter cannot be neglected, since $\TrN(Q_{(1)}) \times \TrN (Q_{(2)})$
contains a double sum too. 
 
\itemb \textit{Face-worded, stuffed maps.} Combinatorial 
maps (``gluing of polygons'' dual to ribbon graphs) are counted with the aid of matrix partition functions \cite{PlanarDiags}. In the 
presence of two random matrices with the probability 
law \eqref{factorizable}, the faces of these maps can be uniformly colored 
(and interpreted as Ising model) \cite[\S 8]{EynardCounting}. If the 
potentials are noncommutative polynomials, this is no longer possible,
and the partition function generates maps whose faces are labelled by ``cyclic words'' in the matrices (thus
 the maps could be called \textit{face-worded}, as presented 
 in Figure \ref{fig:wordedmap} for the alphabet $\{A,B\}$).
If the interaction vertices have several traces,
the generated maps are said to be \textit{stuffed} \cite{stuffed,Khalkhali:2021eso} (independent of whether
the potentials are ordinary or noncommutative). 
The terminology reflects that one now allows maps to have elementary cells 
of a topology that need not be that of a disk, i.e.
one has ``maps stuffed with  bordered Riemann surfaces''.
The renormalization flow we study yields equations 
for the $\beta$-functions for matrix ensembles 
whose partition function generate 
``face-worded, stuffed maps''. The fixed-point solution
of the $\beta$-function system \eeqref{betaeta} could be useful to compute 
critical exponents (see Remark \ref{rem:questions}). 

 \itemb \textit{``Touching interactions''}. In several quantum
 gravity approaches, multi-trace operators appear, to name only few:
 \vspace{1ex}
\begin{itemize}\setlength\itemsep{.64em}%
\item in \textit{Liouville gravity}, multi-trace 
one-matrix models are interpreted as generating functions of 
  surfaces that might touch at isolated points. (The planar sector, for instance, 
is grasped, according to  \cite{Klebanov:1994pv}, as trees of spheres
that can touch other spheres at most once.)

\item   multi-trace interactions appear in
\textit{curvature matrix models} \cite{DasMultitrace}. Double traces appear in 
the effective description of a matrix model
with a kinetic term $\Tr(\phi E\phi E)$ (with broken symmetry by a constant matrix $E$).

\item another interpretation in terms of wormholes
appears in (a certain two-matrix model description of) 3-dimensional \textit{Causal Dynamical Triangulations} \cite{Ambjorn:2001br}

\item under the \textit{AdS/CFT}-correspondence,
the AdS-object matching multi-trace operators in CFT are multi-particle
states. In this context, for those states \cite{Witten:2001ua} defines the natural boundary 
conditions at $\infty$.  
\end{itemize}

\end{itemize}

\begin{figure}
 \raisebox{-.33\height}{\includegraphics[width=.86\textwidth]{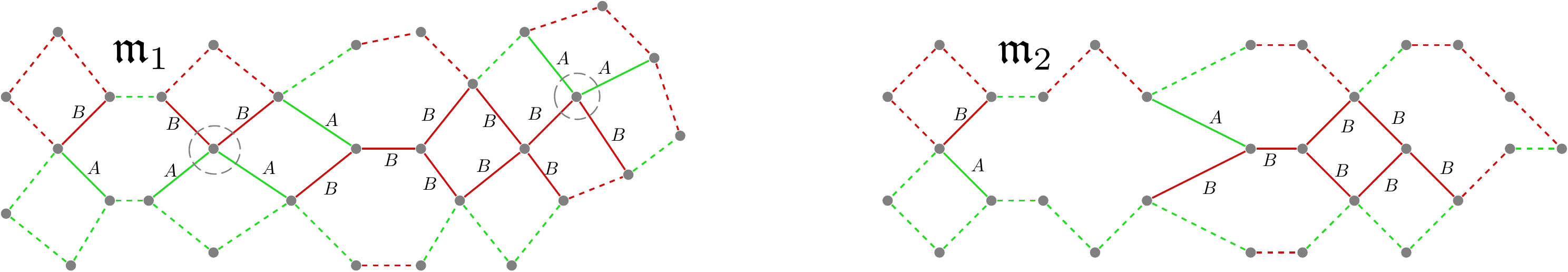} }
 \caption{Example of \textit{(face-)worded maps}  $\mathfrak m_1$ and $\mathfrak m_2$ 
 dual to ribbon graphs generated by multi-matrix models with noncommutative
 polynomial interactions. Each $r$-agon of sides marked with letters $X_{i_1}\ldots X_{i_r}  $ 
is generated by the interaction vertex $\TrN(X_{i_1}\ldots X_{i_r} ) $. 
The relation between  $\mathfrak m_1$ and $\mathfrak m_2$ is the
 renormalization flow. In the cross graining 
 process the one-loop configurations at the nodes 
 marked with dashed circles in $\mathfrak m_1$  yield the effective (in this case, higher-degree) 
 interactions in  $\mathfrak m_2$. This is the dual version of the
 cross graining depicted in Figure \ref{fig:ideal}.
 (Dashed 
edges mean that the maps can extend in that direction and might get some 
non-planar topology). \label{fig:wordedmap}}
\end{figure}

 \tableofcontents

\section{Terminology and main statement}\label{sec:terms_mainstatement}
  
Since our aim is to connect
combinatorics and algebra in matrix models on the one hand,
with renormalization on the other, this article is somewhat 
interdisciplinary. Therefore, it is convenient to 
precisely define our framework and notation. 

\subsection{Ribbon graphs and the noncommutative Hessian on single trace interactions}

The next points introduce our notation and present some definitions:
\begin{itemize}\setlength\itemsep{.64em}%
\itemb The space of Hermitian $N\times N$ matrices is denoted by $\H_N$. The size of matrices 
$X_1\hp N,X_2\hp N,$$\ldots,$ $X_n\hp N\in \H_N$ (which will become the random variables) 
will be relevant, but the lighter notation $X_1,X_2$, $\ldots,$ $X_n $ is convenient. The number $n$ of matrices remains fixed
and we will denote the $n$-tuple $(X_1,X_2,\ldots, X_n )$ by $\mathbb X$.  

 \itemb $\Cfree{n}=\C\langleb X_1,X_2,\ldots, X_n \rangleb$ is the \textit{free algebra}. Any 
 element of $\Cn$ is spanned by words in the alphabet $\mathbb  X$, and 
 $\Cn$ is endowed with the concatenation product. 
 We actually should write $\CnN$ instead of $\Cn$ emphasizing that the 
 generators $X_a$ are matrices of size $N$, but the only manifestation of it is 
 the empty word being the unit matrix $1_N$, and we opt again for a light notation. 

\itemb The \textit{noncommutative derivative} with respect to $A$,
$  \partial_{A} :\Cfree{n} \to \Cfree{n}^{\,\otimes \,2}$ on a word $w$ containing $A$ is the sum over ``replacements of $A$ in $w$ by the $\otimes$ tensor product symbol'' in 
the middle of the word; if $A$ occurs at the left (resp. right) end, then one additionally attaches 
the empty word (or in $\CnN$ a unit $1_N$)
to the left (resp. right) of $\otimes$. For example, in a free algebra with enough letters
\begin{align*} 
\partial_{{A}} ({P\color{Gray}AA\color{black}R})   &=  P \otimes {AR}+ {PA} \otimes   R\,, \\\text{but }\quad 
\partial_{{A}} ({\color{Gray}A\color{black}LGEBR\color{Gray}A}) &  = 1\otimes {LGEBRA}+ {ALGEBR} \otimes 1 \,. 
\end{align*}  
\itemb The noncommutative derivative 
defined on ``cyclic words'' $\Tr P$,  $P\in \Cn$,
is given by  the sum of all possible excisions $P\setminus A$ of $A$ from $P$, rooting (i.e. starting) the remaining word at the letter after the removed $A$
\begin{align*}\partial_A:\im \Tr  \to \Cfree{n} , \quad   \Tr P \mapsto  \sum_{\substack{\text{rootings at}\\ \text{$A$'s next letter}}} P  \setminus A\,.\end{align*}
The result $\partial_A \Tr P=: \mathscr D_A P$ defines the \textit{cyclic derivative} $\mathscr D_A$ of $P$ and is due to Rota-Sagan-Stein \cite{Rota}  and Voiculescu \cite{Voi_gradient}.
For instance, $ 
\partial_{{A}} \Tr({P\color{Gray}AA\color{black}R})   = {ARP}+ {RPA} = \mathscr D_{ A} ({PAAR})$. 
The adjective ``cyclic'' for $\mathscr{D}$ comes from the property $\Day_{X_a} P = \Day_{X _a}[\sigma (P)] $,
which holds for any cyclic permutation  $\sigma(P)$ of the letters of $P$ ($P\in \Cn$ and any $a=1\ldots,n$).

\itemb Grasping $\Tr$ as the trace in $\Cn$
induced by that of $M_N(\C)$, define the \textit{noncommutative Hessian} \cite{FRGEmultimatrix} of a cyclic word 
\begin{align}
\quad\Hess: \im \Tr &\to M_n( \Cn\otimes \Cn)   \nonumber\\
 \Tr P &\mapsto (\Hess_{a,b} \Tr P)_{a,b=1,\ldots,n}:=(\partial_{X_a}\circ\partial_{X_b} \Tr P)_{a,b=1,\ldots,n} \label{NCGHess}\,.
\end{align}
Referring to the block $M_n$-matrix structure, i.e. to indices $a,b=1\ldots,n$, notice that in general the nc Hessian is not a symmetric matrix, $\Hess_{a,b} \Tr P \neq \Hess_{b,a} \Tr P$. 

\end{itemize}

The $(b,a)$-entry in the $M_n$-matrix block structure 
of the Hessian of a cyclic word $\Tr W$ can be represented graphically by summing over all the \textit{ordered}
double markings of $X_a$ and $X_b$ inside a word $W$. 
On $ W= X_{\ell_1}X_{\ell_2} \cdots X_{\ell_{k}}\in \CnN $  
(with $k\geq 2$), according to \eeqref{NCGHess}, this is given 
for $a,b=1,\ldots,n$ by (for a proof see \cite[Prop. 2.3]{FRGEmultimatrix})
\begin{align}\label{doubleNCder}
(\partial_{X_b} \circ \partial_{X_a}) \TrN W=%\sum_{j=1}^n e_j
\sum_{\pi=(uv)} 
\delta^{a}_{\ell_u} 
\delta^{b}_{\ell_v} \pi_1(W) 
\otimes
\pi_2(W)  = \sum_{\pi=(uv)} 
\delta^{a}_{\ell_u} 
\delta^{b}_{\ell_v}  \includegraphicsd{.172}{Laplacianklein} \,,
\end{align}
We sum over all oriented pairings $\pi=(uv)$
between the letters of the cyclic word $\Tr W$  (which explains the circle in the second equality). 
In Eq. \eqref{doubleNCder},  $\pi_1(W) $ is the ordinary 
word between $ X_{\ell_u} $ and $X_{\ell_v} $ and $\pi_2(W)$ that between  $ X_{\ell_v} $ and $X_{\ell_u}$, and because of the deltas $X_{\ell_v} =X_b$ and $X_{\ell_u} =X_a $ must hold,
and the empty word in either case leads to writing $1_N$.

\begin{example}\label{ex:cuts}
To simplify the drawings, we expose the case $n=2$.
We compute the nc Hessian entry corresponding to 
$X_a=A$ and $X_b=B$ on $\TrN W= \TrN (ABAABABB)$.
The entry   reads:
\begin{align*}
\Hess_{b,a} (\Tr W)& =
 \partial _B \partial_A  \bigg( \includegraphicsd{.102}{Octagono}
 \bigg) \\ & = 1_N\otimes \bigg(
 \includegraphicsd{.1}{Octagono_I}
+\includegraphicsd{.1}{Octagono_II}+
\includegraphicsd{.1}{Octagono_III} \bigg)\\ 
& + \bigg(
\includegraphicsd{.1}{Octagono_IV}+
\includegraphicsd{.1}{Octagono_V}+
\includegraphicsd{.1}{Octagono_VI}\bigg) \otimes 1_N  
\\
& + \text{polynomials of the form $P\otimes Q$ with $Q\neq 1_N\neq P$}
\end{align*}   
The cyclicity is lost due to each cut (dashed line).
The word represented by each excision is read starting from the letter
right after\footnote{``After'' is determined by the scissors pointing in 
that direction. This is clearer in the graph in \eeqref{tijeras}.} the cut: the first one is $AABABB$, $\ldots $ ,
and the sixth $BBABAA$.  
These terms that are listed arise from contiguous appearances of $AB$ and $BA$ in $W$
and in each case the empty word between the letters originates the $1_N$ tensor factor. 
According to \eeqref{doubleNCder}, the rest of the 
polynomials (last line) are computed by cutting the circle
into two non-trivial words. For instance the next cut yields
\begin{align}\label{tijeras}
\raisebox{-.46\height}{\includegraphics[height=2.3cm]{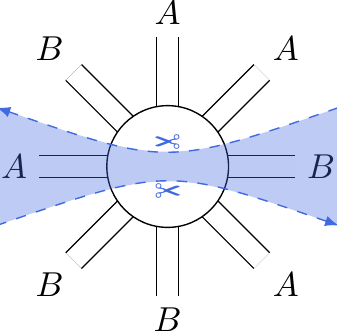}}
\to  BAA \otimes ABB 
\end{align}
The order of the derivatives $\partial _B \partial_A $ (to the left of the cut ``from $A$ to $B$'', $\Rightarrow$ first factor, to the left of ``from $B$ to $A$'' $\Rightarrow$ second factor)
determines which word is placed in which tensor factor.
\end{example}

\subsection{Multi-trace interaction vertices, effective vertices }

The interaction vertices 
in the measures $
 \dif{\mu}(\mathbb X)= \exp\big\{ -N \TrN (P)  -  \TrN^{\otimes 2 }  (Q_{(1)}\otimes Q_{(2)} ) \big\} \dif\gamma(\mathbb X)$ 
are represented by ribbon vertices framed with a dashed circle. This is unusual,
but in view of the multiple products of traces in the measure, a helpful notation. 
The coupling constant  $\bar g$ of multiple trace interactions is what 
prevents the multiple traces from being interpreted as different, 
disconnected polygonal building blocks (and are interpreted as ``touching-interactions'' \cite{DasMultitrace,Klebanov:1994pv,Ambjorn:2001br,Witten:2001ua} in 
other settings). 
Their relation to the free algebra is explained with the following examples (where green/light means
the $A$ matrix and red/dark represents $B$)
\begin{align}\bar g_1\TrN (ABBBAB) \quad\leftrightarrow &\quad\includegraphicsd{.084}{IntVert1}\,\,\\
\bar g_2\TrN^{\otimes 2}(AABABA\otimes AA )\quad
\leftrightarrow& \quad \includegraphicsd{.084}{IntVert2}  \\
\bar g_3\TrN^{\otimes 2}(BBABB \otimes A )\quad
\leftrightarrow& \quad \includegraphicsd{.084}{IntVert3} 
\end{align}
The convention is that the label of the coupling constants applies to everything inside the dashed circle, i.e. simultaneously both traces (see also Example \ref{ex:1loop}). 
This representation also reflects the mathematical nature of the effective action  $\Gamma_N[\mathbb X]$  as (for now, at least) a formal series (with the coupling constants as parameters) of the form 
\begin{align} \label{GammaFormalSeries}
\Gamma_N[\mathbb X]= \sum_\alpha O_\alpha \qquad  O_\alpha=\bar g_{\alpha} \prod_{r=1}^{t_\alpha} \Tr_N(w_{\alpha,r})\,,\quad  \text{ monomials }w_{\alpha,r} \in \Cn = \C\langleb \mathbb X \rangleb \end{align}
so $t_\alpha$ is the number of
traces in the \textit{operators} $O_\alpha$. The monomials $ w_{\alpha,r} $ need not be monic; as a matter of fact,
one usually normalizes $w_{\alpha,r}$  with symmetry factors. The coefficient of the kinetic operator $\Tr ( X_c^2/2)$ (for each $c=1,\ldots,n$)  
is called the wave function renormalization (of the matrix $X_c$) 
and, since it is special, it is usually denoted not by a $\bar g$ but by  $Z_c$.  Else, we call  \textit{interaction vertices} the remaining $O_\alpha$'s. 
The bar on the coupling constant $ \bar g_i=\bar g_i(N)$, which are functions of $N$,  denotes that it will still be
rescaled $\bar g_\alpha \to g_\alpha=Z^{\lambda_\alpha} N^{\kappa_\alpha} \bar g_\alpha$, solving 
for $\lambda_\alpha$ and $\kappa_\alpha$, in order to render finite and $Z$-independent  the next
system (only in the large-$N$ differential\footnote{I thank Alexander Schenkel for 
pointing out that the parameter $t$ is still discrete for finite $N$ and 
thus \eeqref{betaeta} is not yet a system of differential equations.}) equations
\begin{subequations}\label{betaeta}
% \label{betasystem}
\begin{align}
\big\{\eta_c:\!&=-\partial_t \log Z_c = -Z_c\inv \times \text{coeff. of } \Tr_N(X_c^2/2) \text{ in rhs of \eeqref{Wetterich}} \big\}_{c=1,\ldots,n}
\\ 
\Big\{ \beta_\alpha:\! &= \partial_t g_\alpha  = \text{coefficient of }\prod_{r=1}^{t_\alpha} \Tr_N(w_{\alpha,r}) \text{ in the rhs of \eeqref{Wetterich}} \Big\}_{\alpha}\,,\quad t=\log N\,
\end{align}
\end{subequations}%
of ($\eta$-functions and) $\beta$-function equations
for the interaction vertices $\alpha$, 
determined by Wetterich equation. 
This list of operators appearing in Eqs. \eqref{betaeta} includes those of the original (bare) 
action $S$, but additionally those generated from it by ``radiative corrections'' to $S$.
For instance\footnote{These graphs are based on the next comment: \cite{komentarzABAB}.} 
if the initial model is given by  \begin{align} \label{ABABaction}
S=N \TrN\Big\{ \frac12 A^2+\frac12 B^2 + g_{A^4}\frac{1 }{4}A^4+ g_{B^4}\frac{1}4 B^4 +\frac12 g_{ABAB} ABAB \Big\}
                                  \end{align}  then the radiative corrections 
\begin{align}  
\includegraphicswextra{.0691}{RadiativeCorr1}{412} \,,\hspace{.051\textwidth}
\includegraphicswextra{.0691}{RadiativeCorr2}{412}\,,\hspace{.051\textwidth}
\includegraphicsd{.142}{RadiativeCorr3} 
\,,\hspace{.051\textwidth}
\includegraphicsd{.142}{RadiativeCorr4}\,,\ldots \label{RadiaCorr}
\end{align}
``generate'' the \textit{effective vertex} $N \TrN(ABBA)$ (see below, how). 
Also disconnected vertices are generated; for instance, 
$\TrN(A)\times \Tr(A) $ is generated from $A^4$ (by contracting
non-consecutive half-edges) and from $ABAB$ (by contracting
the $B$ edges).

Therefore, the effective action should include these (and all corrections), and becomes\footnote{Notice that the $N$ can be re-absorbed in $Z$ and the bar-coupling constants, but other conventions are possible.} 
\begin{align}
\Gamma_N[A,B]& = \TrN  \Big\{ \overbrace{\frac {Z_A}2 A^2+\frac {Z_B}2 B^2 + \bar g_{A^4}\frac{1 }{4}A^4+ \bar g_{B^4}\frac{1}4 B^4+\frac12 \bar g_{ABAB} ABAB }^{\text{operators from the bare action (but with ``running couplings'')}} \\ & \qquad \,\,\,\,
+  \underbrace{\frac{1}{2}\bar g_{ABBA} ABBA + \frac12\bar g_{A|A}\TrN(A)\times A +\ldots}_{\text{radiative corrections}}  \Big\}
\end{align}

The effective vertices are obtained by taking the boundary graph of the 
radiative corrections. In other words, they are constructed from a Feynman graph---as were those in
\eqref{RadiaCorr} for the model \eqref{ABABaction}---as defined next, and 
explained with examples immediately thereafter.

\begin{definition}[Effective interaction vertex.]
Given a Feynman graph of a multi-trace multi-matrix model, first single out the 
traces $\TrN(U_1),\ldots,\TrN(U_r)$
that are not contracted by a propagator. Second, 
pick an arbitrary side of a ribbon-propagator and travel along the diagram 
with the orientation induced by the clockwise orientation of the interaction vertices,
listing in that order the letters that label the half-edges of these (cf. Ex. \ref{ex:orienta}) until one comes
back to the initial, chosen propagator (on the same side); 
call the thus obtained word $ w_1$. Repeat this process picking
an unvisited side of a propagator, and iterate until all ribbon propagators visited once by both sides (and thus 
all uncontracted half-edges are listed exactly once), say, at the $s$-th iteration.  The effective vertex $O_G^{\mathrm{eff}}$ of the graph $G$ is defined by 
\begin{align*}
O_G^{\mathrm{eff}}=\underbrace{\TrN (w_1)\times  \TrN (w_2)\times \cdots \times \TrN (w_s)}_{\text{from vertices
contracted with propagators}} \times \!\!\underbrace{
\TrN(U_1)\times   \TrN (U_2)\ldots\times \TrN(U_r)}_{\text{from vertices
uncontracted with propagators}}
\end{align*}
Since the words appear
inside the trace, the construction is evidently independent of the propagators we started with to construct each word $w_1,\ldots,w_s$. 
\end{definition}
\noindent 
 \begin{minipage}{.62\textwidth}
\begin{example}\label{ex:Nfactor}(Graphs containing an empty loop.) To illustrate the effective vertex construction of a two-matrix model,
consider the graph on the right, which corresponds to a 
correction from the operators
\begin{align}
O_1&= \bar g_1 \TrN(B^3ABA)\,, \\  O_2 &= \bar g_2 \TrN(A^2)  \TrN(A^3BAB)  \,.
\end{align}
The effective vertex is $N \bar g_1\bar g_2\TrN (A^2) \TrN(B^3A^2BA^2)$. 
The quadratic trace comes from the uncontracted trace in $O_2$;
the long word comes from the ``outward'' loop and the
factor $N=\TrN 1_N$ from the inner, empty word. 
\end{example}
 \end{minipage}\,
 \begin{minipage}{.36\textwidth} \hspace{1.0941cm}
\raisebox{-.56\height}{\includegraphics[width=22.5ex]{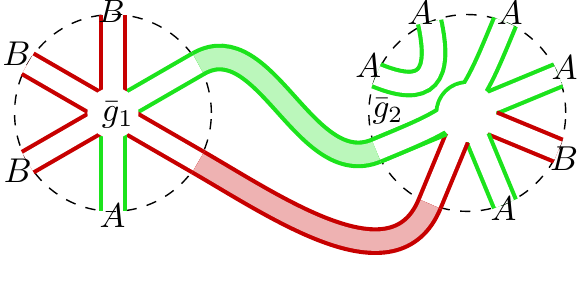}}
 \end{minipage}
\\[2ex]
\noindent 
 \begin{minipage}{.62\textwidth}
\begin{example}\label{ex:orienta}(Orientation of loops.) With the operators 
\begin{align}
O_1&= \bar g_1 \TrN( CFBDEA)\,, \\  O_2 &= \bar g_2 \TrN(  CDFABCEA) \,, 
\end{align}
we now illustrate the orientation of the loops. Each
operator endows the interaction vertex with an orientation.
The effective vertex should be read off respecting it.
This means that outward loops are clockwise oriented
and inward loops anti-clockwise. 
The effective vertex is $\bar g_1\bar g_2 \TrN( CFCE) \times \TrN(CDFADE )$.
\end{example}
 \end{minipage}\,
 \begin{minipage}{.36\textwidth} \hspace{1.0941cm}
\raisebox{-.56\height}{\includegraphics[width=23ex]{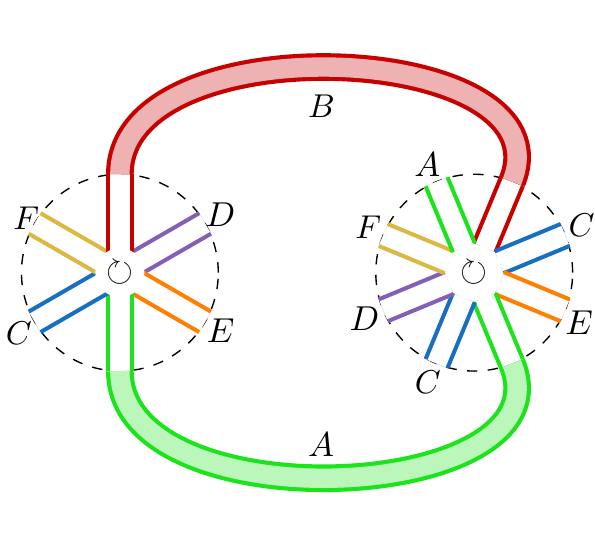}}
 \end{minipage}
\\[2ex]
\noindent 
\begin{minipage}{.62\textwidth}
\begin{example}(Propagators joining different traces in the same interaction vertex.) \label{ex:disconn_as_graph}Consider now the graph on the right. 
\begin{align}
O_1&= \bar g_1 \TrN(ADBADB)\,, \\  O_2 &= \bar g_2 \TrN(BACDBACD)\,, \\
O_3&= \bar g_3 \TrN(C^2)\times\TrN(D^3BDB) \,, \\ O_4 &= \bar g_4 \TrN(A^4) \times \TrN(D^6) \,,
\end{align}
(and possibly other more operators making the action real). 
The effective vertex is $\bar g_1\bar g_2 \bar g_3\bar g_4 \times \TrN(BDBD^7)\times \TrN (A^3DACDBACDADB)$. 
This graph is also a one-loop (see Definition \ref{def:oneloop} 
for the subtleties that appear in the presence of multi-trace interactions).
\end{example}
 \end{minipage}\,
 \begin{minipage}{.36\textwidth} \hspace{.41cm}
 \includegraphicsd{.79}{Explicit4MM_annotated}
 \end{minipage}\\[2ex]
%  \allowdisplaybreaks[3]
% 

In the presence of multi-traces, the one-loop condition cannot
be formulated purely in terms of the first Betti-number $b_1(G)$. Instead
\begin{definition}\label{def:oneloop}
Let $G$ be a ribbon graph of a multi-trace multi-matrix model. We denote by $G^{\circ}$ the 
one-dimensional skeleton obtained after collapsing the interaction vertices\footnote{In 
the single-trace random matrices literature these are sometimes called ``stars'' \cite{GuionnetFreeAn}. In
the definition of the one-dimensional skeleton it is implicit 
that we ignore discrete spaces obtained from the many traces that might be floating 
around the one-dimensional complex. This will be clear in Ex. \ref{ex:1loop}.} to points and the 
propagators (edges between interaction vertices) to ordinary edges. 
 A \textit{one-loop graph} of a multi-matrix model with multi-traces
 is a ribbon graph $G$ 
 whose skeleton $G^\circ$ 
 is one-particle irreducible (1PI; or, equivalently,  a $2$-edge connected graph)
 and which additionally has a first Betti-number $b_1(G^{\circ})= 1$. 
\end{definition}

\begin{example} \label{ex:1loop}The next three diagrams are all one-loop graphs: \\[.5ex]
\begin{align*}G_1= \raisebox{-.56\height}{\includegraphics[width=3cm]{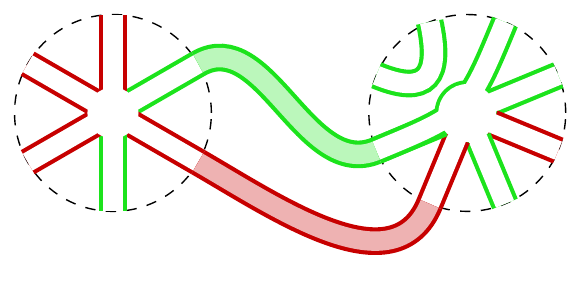}}\,,\,\, 
G_2=\raisebox{-.56\height}{\includegraphics[width=3cm]{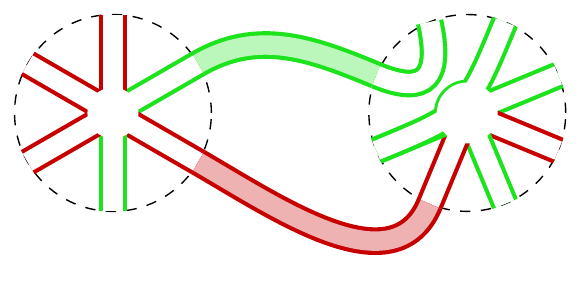}}\,,  \,\,
G_3=\raisebox{-.61\height}{\includegraphics[width=3cm]{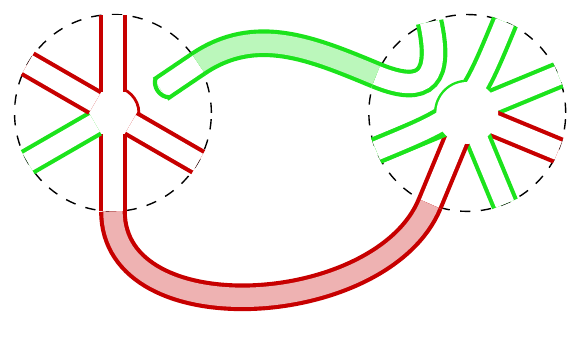}}\,.\end{align*}
(We omit the coupling constants $\bar g_i$ by now, since we care about 
topology in this example). First, $G_1$ has $b_1(G_1^\circ)=1$; next, although $b_1(G_2) \neq 1$,
since thinning the edges and collapsing the stars (dashed circles) 
to points yields a circle, $G_2^\circ$ does have first Betti-number 1. The 
same argument holds for $G_3$. Having these graphs explained the subtleties
of the multiple traces, we give now ordinary examples. Regarding\\[-2.5ex]
\begin{align*}G_4=\raisebox{-.41\height}{\includegraphics[width=1.4cm]{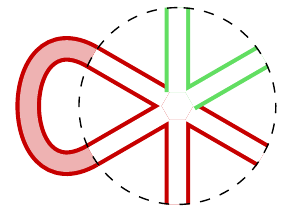}}\,,\, \qquad G_5=\raisebox{-.46\height}{\includegraphics[width=2.76cm]{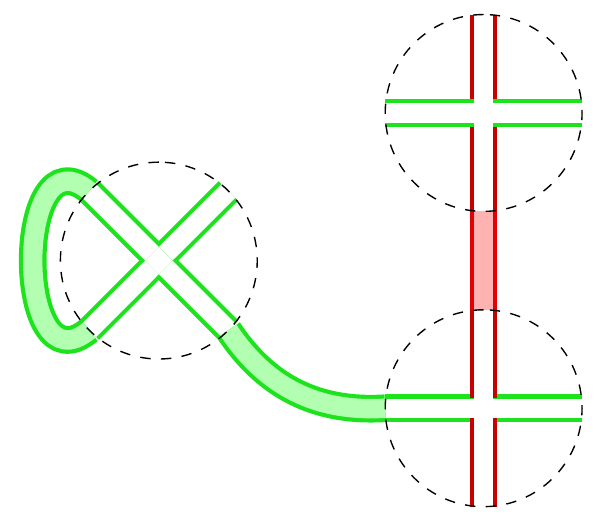}}\,, \qquad G_6=\raisebox{-.47\height}{\includegraphics[width=3.199cm]{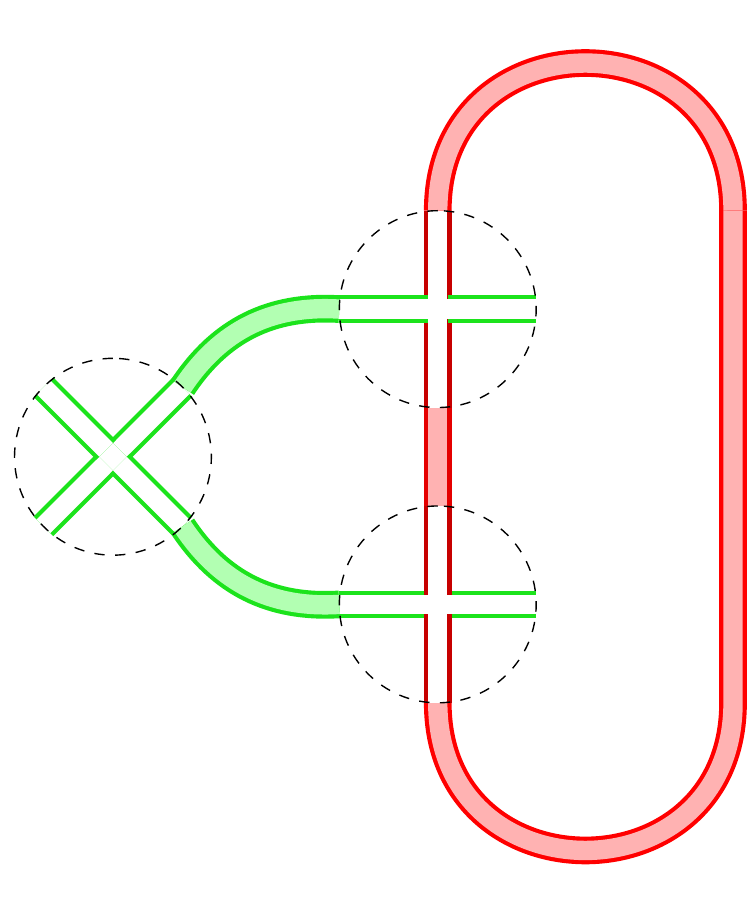}}\,\,,\end{align*}
only the \textit{tadpole} $G_4$
is a one-loop graph; although $b_1(G_5^\circ )=1$,
$G_5$ is not 1PI. And $G_6$ is such that $G_6^\circ$ has two loops, $b_1(G_6^\circ)=2$, so neither $G_5$ nor $G_6$ satisfy Definition \ref{def:oneloop}. \end{example}

\subsection{Including multi-traces and the main result}

We denote by $\Nmax \in \N$ the energy scale at which the bare 
action describes the system\footnote{ The reason for the notation $\Nmax$
is that, at the end, that integer can be thought of as being $\infty$. One computes
first all with finite $\Nmax$ and then takes the limit $\Nmax\to \infty$}. 
The renormalization flow modifies then the probability measure used to compute
observables as follows.
The starting point of the flow is the bare  action $S$ (or the measure $ \dif \mu\uvmax$ defined by it) \begin{equation}S:\H_{\Nmax}^n\to \re\qquad 
\dif \mu\uvmax=\exp\Big\{-S\big[\mathbb X\hp{\Nmax}\big]\Big\} \dif{\mathbb X}\Lebesgue\hp{\Nmax} \,.
                   \end{equation}
                   The ``$\textsc{\small{uv}}$'' in the measure emphasizes that 
the action $S$ that defines the probability measure $\dif\mu\uv=\exp\big\{-S[\mathbb X]\big\}(\dif \mathbb X) \Lebesgue$ is the bare action. 
                   In order to flow towards a lower energy scale $N< \Nmax$, 
                   a regulator $R_N$ takes care of integrating the \textit{higher modes} (i.e. the matrix
entries $N<i,j\leq \Nmax$ of each of the $n$ matrices; see Appendix \ref{app:RN}).
This smoothens the idea of step-by-step
integration \cite{Brezin:1992yc} of the $N+1$-th momentum shell,  in order to obtain from  
 ensembles of matrix of size $N+1$, effective ensembles of $N\times N$ matrices. This
 idea was put forward in \cite{EichhornKoslowskiFRG} for the one-matrix model
 in a quantum gravity context. Other renormalization theories based on Polchinski equation have been addressed in \cite{krajewskireiko}.  \par 
The system at that lower scale $N$ is described by the  
                     effective action $\Gamma_N$
                   and by the respective measure $\dif\mu \eff$ at the scale $N$,
                   \begin{align}\Gamma_N : \begin{cases}                                            
\H_{N}^n\!\!\!\!\!\!& \to \re, \\\mathbb X\hp{N}\!\!\!\!\!\!&\mapsto \Gamma_N[\mathbb X\hp N]
                                           \end{cases}
                   \qquad \dif\mu\eff(\mathbb X\hp N)= \exp\big\{-\Gamma_N [\mathbb X\hp{N} ]\big\} \dif{\mathbb X}\Lebesgue\hp{N}\,.
                   \end{align}

It can be rigorously proven  \cite{FRGEmultimatrix} that the effective action satisfies Wetterich equation,
\begin{equation}
\partial_t \Gamma_N[\mathbb{X}]= \frac12 \STr \Big(\frac{\partial_t R_N}{ \Hess \Gamma_N [\mathbb{X}]+R_N}\Big)\,,\label{Wetterich}
\end{equation}  but as pointed out 
in the introduction, this is not  the approach we follow in this article.
We rather assume that the renormalization flow is governed by an equation of the form
\eqref{Wetterich} and let ribbon graph theory dictate us 
the several objects that appear, specially the algebra obeyed by the Hessian. 
 If an expansion in $\mathrm U(N)$-invariant operators exist,
one is able to split the supertrace as follows:
\begin{equation}\label{rhsW}
 \frac12\STr \Big\{\frac{\partial_t R_N}{ \Hess \Gamma_N [\mathbb{X}]+R_N}\Big\} 
 = \sum_{k=0}^\infty   \underbrace{ \vphantom{\frac12} \bar h_k(N,\eta_1,\ldots,\eta_n )}_{\text{$R_N$-dependent part}} \times \underbrace{ \frac12(-1)^k \STr \big\{  (\Hess \Gamma \inter_N [\mathbb X])^{\star k} \big\}}_{\text{regulator-independent part}}\,,
\end{equation}
where $\bar h_k(N, \{\eta_1,\ldots,\eta_n\}) $ is a function of $N$ and the anomalous
dimensions $\eta_c=-\partial_ t \log Z_c$; 
% $\mathcal C$ is the correlation, or inverse propagator, and 
finally, $\Gamma\inter_N[\mathbb X]$ is
the interaction part of $\Gamma_N$, which will be constructed below. 
Since we are looking for a ``universal'' algebras (not in 
the usual sense, but in the sense
that they will appear independent on the regulator $R_N$)
details on $R_N$ are placed in Appendix \ref{app:RN}.
\par 

In order to find the algebra $\mathscr A$
where the Hessian of the effective action lies,
let us search for the identity element of $\mathscr A$.
Because this algebra should contain $M_n(\Cn\otimes \Cn)$  (still seen as a a vector space),
we assume that $\mathscr A$ is also a matrix algebra of the form
 $\mathscr A=M_n(\A_n)$ for certain $\A_n$, and define the \textit{supertrace}\footnote{This is a historical 
 terminology which should not evoke supersymmetry.} $\STr$  
 on a matrix $\mathcal P=(P_{a,b})_{a,b=1,\ldots, n} \in \MA$, $P_{a,b} \in \A_n$ by
\begin{align}
\label{Supertrace}
\STr (\mathcal P) =\sum _{a=1} ^n   \Tr_{\A_n}(P_{a,a})\,
\end{align}
in terms of $\Tr_{\A_n}$, 
 where $\A_n$, its product $\star$ and its trace $\Tr_{\A_n}$  are to be determined.  

For this purpose, we observe that the effect of the kinetic terms, at a graph level, is just elongating the 
ribbons, and since all $R_N$-dependence has been absorbed in the coefficients
$\bar h_k$ in \eeqref{rhsW}, we conclude that the Hessian of the kinetic terms 
cannot modify the effective vertex at all: since, for $a,b,c,d\in\{1,\ldots,n\}$,
\begin{subequations}\label{identityonAn}
\begin{align} 
 \cdots \Hess O_{d,b}  \star \Hess \Big\{\frac12  \Tr (X_c^2)\Big\}_{b,a} \cdots & = \cdots \nonumber
\includegraphicsd{.31}{IdentityOfAnL} \cdots  \\
&= \cdots \includegraphicsd{.31}{IdentityOfAnLBlau}\times \delta_{a}^c  \delta_{b}^c   \cdots  \\
  \cdots\Hess \Big\{\frac12  \Tr (X_c^2)\Big\}_{a,b} \star  \Hess O_{b,d}  \cdots& = \cdots  \nonumber \includegraphicsd{.31}{IdentityOfAnR}  \cdots \\
 &= \cdots  \includegraphicsd{.31}{IdentityOfAnRBlau} \times
 \delta_{a}^c  \delta_{b}^c \cdots 
\end{align}
\end{subequations}
for any interaction vertex $O$. 
On the other hand,
the double trace  terms $[\Tr X_c]^2$ ``cut'' the interaction vertex:
\begin{subequations}\label{NotQuiteidentityonAn}
\begin{align}  \cdots
\Hess \Big\{\frac12  [\Tr (X_c)]^2\Big\}_{a,b} \star  \Hess O_{b,d}\cdots & = \cdots\delta_{a}^c  \delta_{b}^c 
\includegraphicsd{.31}{IdentityOfAnNotquiteLBlau}\cdots \\[1ex]
\cdots \Hess O_{d,a}  \star \Hess \Big\{\frac12  [\Tr (X_c)]^2\Big\}_{a,b}\cdots& = \cdots\delta_{a}^c  \delta_{b}^c  \includegraphicsd{.31}{IdentityOfAnNotquiteRBlau}\cdots
\end{align}
\end{subequations}
By \eeqref{identityonAn}, $\Hess_{c,c} \frac12  \Tr (X_c^2)  =   1_N\otimes 1_N$ (no sum) is the left and right identity of $\A_{n,N}$, 
and by \eeqref{NotQuiteidentityonAn} there is another constant generator in  $\A_{n,N}$ that, by
the previous graph argument, is not proportional 
to $1_N\otimes 1_N$ (and therefore cannot be the identity) and which we denote by $1_N\totimes 1_N$.
\begin{definition}
We define  $\A_n:=\Cfree{n}^{\,\otimes\, 2} \oplus \Cfree{n}^{\,\totimes \,2}= [\Cn \otimes \Cn] \oplus [\Cn \totimes \Cn]$. Again,
 this is simplified notation for $\A_{n,N}$ defined
 as $\A_n$, but with $\CnN$ instead of $\Cn$. 
\end{definition}
 So far, $\A_n$ is only a vector space and $\totimes$ is just a symbol
 which will be different from $\otimes$ when we leave the 
 category of vector spaces and grasp $\A_n$ already as an algebra. 
 The bilinearity of $\totimes$  is due to 
 the coupling constants $\bar g $ of interaction vertices 
$O=\TrN[ \bar g Q_1 ]\TrN Q_2 = \TrN Q_1 \TrN [ \bar g  Q_2 ]$, which can ``enter into any 
trace''. Thus, $\totimes$ must satisfy $(\lambda U )\totimes W = U \totimes( \lambda W)
 $ for complex $\lambda$ and $U,W\in \Cn$.
The noncommutative Hessian can be extended
% \footnote{If we 
% are precise, but pedantic, we should write that 
% the free algebra $\Cn$ implied in 
% $\A_n$ has coefficients not in $\C$ but 
% also cyclic words, i.e. 
% in $\C[ \{\Tr P | P\in \Cn \}] $, but this would be a very heavy notation.} 
to products
of traces as follows: 
 \begin{definition}\label{def:HessondoubleTrs}
   On double traces $\Hess : \im \Tr ^{\otimes 2} \to M_n(\A_n )$ 
is given by
\begin{align}\label{Hess_bitrace}
\Hess \big\{\!\Tr ^{\otimes 2} (P\otimes Q)\big\}
=\Hess P \times \Tr Q + \Hess Q \times \Tr P + \Delta (P,Q)\,,
\end{align}
where $
                                               \Delta (P,Q)=(\Delta_{a,b}(P,Q))_{a,b=1,\ldots,n} $ has the following $M_n$-matrix entries  \begin{align}\Delta_{a,b}(P,Q)& =\partial_{X_a}\TrN  P \totimes \partial_{X_b}\TrN  Q
+ \partial_{X_a}\TrN  Q \totimes \partial_{X_b} \TrN P  \nonumber \\
& =\Day_{X_a}  P \totimes \Day_{X_b}  Q
+ \Day_{X_a}  Q \totimes \Day_{X_b}  P\,. \label{bitraces_entries}
\end{align}
 \end{definition}

 \begin{lemma}
  The trace $\Tr_{\A_n}$ on $\A_n$ is defined\footnote{Just as
 the operators $\Day_A$ and $\partial_A$, this abstract trace 
 is the result of matrix-trace calculations with entries.
 In \cite{FRGEmultimatrix} the relation to those is exposed.
 If the reader wants to look up there, there is however a change
 of notation; $\otimes_\tau$ there is $\otimes$ here; also $\otimes$ corresponds with  our $\totimes$.} in terms of $\TrN$ by linear extension of  
 \begin{subequations}\label{FRGtraces}
 \begin{align}
  \Tr_{\A_n} (P\otimes Q) &= \Tr_N^{\otimes 2} (P\otimes Q) = \TrN (P) \times \TrN(Q)\label{FRGtracesTensor}\\
  \label{FRGtracesOTensor} \Tr_{\A_n} (P\totimes Q) & = \TrN (PQ)
 \end{align}
 \end{subequations}% 
 \end{lemma}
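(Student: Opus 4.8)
The plan is to pin down $\Tr_{\A_n}$ from the requirement that $\STr$ of \eqref{Supertrace} reproduce the matrix-level supertrace appearing in \eqref{rhsW}, and to exhibit the graphical (equivalently, matrix-entry) computation that yields \eqref{FRGtraces}. First I would note that $\Tr_{\A_n}$ is asked to be linear and that, by construction, $\A_n$ is spanned by the homogeneous generators $P\otimes Q$ and $P\totimes Q$ with $P,Q$ monomials in $\mathbb{X}$; so it suffices to determine its value on those two families. I would also check at the outset that the proposed formula descends to the quotient imposing $(\lambda P)\totimes Q=P\totimes(\lambda Q)$, which it does because $\TrN\big((\lambda P)Q\big)=\TrN\big(P(\lambda Q)\big)$; and I would record that the $\star$-product of homogeneous elements is again homogeneous (of $\otimes$- or $\totimes$-type), so that $\STr\big((\Hess \Gamma\inter_N[\mathbb{X}])^{\star k}\big)$ is entirely governed by $\Tr_{\A_n}$ on generators.

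The heart of the matter is the graphical reading of $\STr$ as ``closing the open one-loop''. A homogeneous summand $P\otimes Q$ occurring in $(\Hess \Gamma\inter_N[\mathbb{X}])^{\star k}$ encodes, as in Figure \ref{fig:notideal}, an annulus cut open along a propagator, with $P$ read along the inner boundary arc and $Q$ along the outer one (the left slot having been built up by left-multiplications, the right slot by right-multiplications, whence the ``inversion'' noted in the introduction); applying the matrix supertrace reglues the cut into a genuine one-loop ribbon graph, whose boundary is a pair of disjoint cycles and whose effective vertex is therefore $\TrN(P)\times\TrN(Q)=\TrN^{\otimes 2}(P\otimes Q)$, which is \eqref{FRGtracesTensor}. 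A homogeneous summand $P\totimes Q$ instead originates from the $\Delta$-term of Definition \ref{def:HessondoubleTrs}, i.e. from one cyclic derivative taken in each of the two traces of a multi-trace vertex; the corresponding words sit on boundary arcs that get joined through the cut, so reclosing the loop concatenates them into a single boundary cycle reading $PQ$, with effective vertex $\TrN(PQ)$, which is \eqref{FRGtracesOTensor}. Equivalently, and more mechanically, I would verify both formulas at the level of $N\times N$ entries exactly as in \cite{FRGEmultimatrix}: contracting the two index loops carried by $P\otimes Q$ gives $\sum_{i,j}P_{ii}Q_{jj}$, whereas the single index loop carried by $P\totimes Q$ gives $\sum_i (PQ)_{ii}$ — I would present this entry computation as the actual proof, with the ribbon picture as the conceptual gloss.

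As an internal consistency check (not logically needed for the statement, but reassuring alongside Thm.~\ref{thm:main}) I would verify that the functional so defined is genuinely a trace on $(\A_n,\star)$, i.e. $\Tr_{\A_n}(x\star y)=\Tr_{\A_n}(y\star x)$ for homogeneous $x,y$. This is a one-line check in each of the four cases of \eqref{firstFRG} using only cyclicity of $\TrN$: for instance $\Tr_{\A_n}\big((U\otimes W)\star(P\otimes Q)\big)=\TrN(PU)\TrN(WQ)=\TrN(UP)\TrN(QW)=\Tr_{\A_n}\big((P\otimes Q)\star(U\otimes W)\big)$, the $\otimes$--$\totimes$ case gives $\TrN(WPUQ)=\TrN(PUQW)$ on both sides, and the $\totimes$--$\totimes$ case gives $\TrN(WP)\TrN(UQ)$ on both sides.

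The step I expect to be the genuine obstacle is the $\totimes$ case of the second paragraph: making precise why reclosing the loop through a cut multi-trace vertex merges the two words into a single trace in the specific order $PQ$, rather than leaving two separate cycles or producing some other arrangement. This is exactly the subtlety flagged in Definition \ref{def:oneloop} — for multi-trace graphs the loop/boundary bookkeeping is not captured by the Betti number alone — and it must be settled by tracking the ribbon orientation induced by the clockwise order at the interaction vertices through the very excisions $\Day_{X_a}P$ and $\Day_{X_b}Q$ that produced the $\totimes$ in the first place.
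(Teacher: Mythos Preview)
Your approach is correct but takes a different route from the paper's. You propose to verify the trace formulas via matrix-entry computations (as in \cite{FRGEmultimatrix}), treating the ``closing the loop'' ribbon picture as a conceptual gloss on a general $k$-th power; the paper instead derives the trace purely from ribbon graphs by constructing specific \emph{tadpole} operators (the $k=1$ case) and reading off their effective vertices. Concretely, it takes $O=\bar g\,\TrN(X_cPX_cQ)$ with $P,Q$ free of $X_c$, so that $\Hess_{c,c}O\supset P\otimes Q+Q\otimes P$, and observes that the tadpole effective vertex is $\TrN(P)\,\TrN(Q)$, forcing \eqref{FRGtracesTensor}; likewise $O'=\bar g'\,\TrN(PX_c)\,\TrN(QX_c)$ gives $\Hess_{c,c}O'\supset P\totimes Q+Q\totimes P$ with tadpole effective vertex $\TrN(PQ)$, forcing \eqref{FRGtracesOTensor}. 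It then argues separately that extra $X_c$'s in $P,Q$ produce additional Hessian summands whose effective vertices cannot coincide with $\TrN(P)\TrN(Q)$ (resp.\ $\TrN(PQ)$), so they do not affect the identification. Your entry-level check is the most direct verification, but the paper's tadpole construction is what keeps the argument self-contained within the ribbon-graph program announced in the title: it \emph{derives} the trace from the one-loop requirement rather than importing it from the companion paper. Your trace-property check $\Tr_{\A_n}(x\star y)=\Tr_{\A_n}(y\star x)$ is a nice extra that the paper does not include here.
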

\begin{proof}
The tadpoles yield the desired relations. To obtain the first,
for any fixed $c\in \{1,\ldots,n\}$, 
consider an interaction vertex $O=\bar g \Tr (X_c P X_c Q)$
with $P,Q\in \Cn$ satisfying $\partial_{X_c}P=\partial_{X_c} Q=0$
(e.g. take $P,Q \in \C_{\langleb n-1 \rangleb}= \C \langleb X_1,\ldots, X_{c-1}, X_{c+1},\ldots, X_n\rangleb $). The contribution to the rhs of the 
flow equation is \begin{equation}
                  \frac12 \STr  \Hess O = 
 \frac{\bar g}2 \TrA (P\otimes Q+ Q\otimes P)  +
%  \text{terms not implying $\partial_{X_c}$}
\text{terms not implying $c$-propagators}\,.\label{tadpolesintermediaA}
                 \end{equation}
The value of the first two summands is determined by the effective vertex of the graphs 
that the Hessian computes according to \eeqref{doubleNCder}. These are such that 
the two ribbons are attached at the only two $X_c$ matrices in $O$,  
\[\Hess_{c,c} O = \hspace{-7ex}
\includegraphicsd{.32}{Tadpole2}\hspace{-7ex} + 
\includegraphicsd{.198}{Tadpole3} % +\text{terms not implying $X_c$ contractions.}
\]
    The ellipsis means that in the graphs, $P$ is the word after the 
    contracted $X_c$ running clockwise until the next $X_c$, after which $Q$
    begins. The seemingly different propagator contraction is just 
    an attempt to reflect that in the first graph $P$ is inside the loop 
    and $Q$ outside, with these words in the other way around for the second graph.
    However these two graphs are indistinguishable, thus, for 
    each graph the effective vertex  reads $\bar g\TrN(P) \times \TrN(Q)$,
    so by \eeqref{tadpolesintermediaA}, \eeqref{FRGtracesTensor} follows. To obtain the other
product, we consider tadpoles with the ends of the propagator 
on different traces of the same operator. Let
\[O'=\bar g' \TrN(P X_c) \TrN(Q X_c)= 
\includegraphicsd{.12}{TadpoleVertex}
% \] 
% whose Hessian $(cc)$-entry reads
% \[
\,
\Rightarrow\,
\Hess_{c,c} O' = \includegraphicsd{.18}{Tadpole0_cropped}+
\includegraphicsd{.18}{Tadpole1}  \,.
\]
By \eeqref{Hess_bitrace} and \eeqref{bitraces_entries}, 
\begin{equation}
                  \frac12 \STr  \Hess O = 
 \frac{\bar g'}2 \TrA (P\totimes Q+ Q\totimes P)  +
%  \text{(terms not implying $\partial_{X_c}$)}
\text{terms not implying $c$-propagators}\,.\label{tadpolesintermedia}
                 \end{equation}The effective vertex of each graph is $\TrN(PQ)$, which must be the value of 
$\TrA(P\totimes Q)$, but since the graphs are indistinguishable, also of $ \TrA(Q\totimes P)$,
Therefore, \eeqref{tadpolesintermedia} implies \eeqref{FRGtracesOTensor}.\par 
Now let us consider the general case, where $P$ might depend on $X_c$ (the dependence of $Q$ on $X_c$ can be likewise implemented, additionally, but the argument 
is the same in essence). Suppose that $P=P_{\text{\tiny L}} X_c
P_{\text{\tiny R}}$, where $P_{\text{\tiny L}},P_{\text{\tiny R}}
\in \Cfree{n}$ are monomials independent of $X_c$. In this simple case,
the rhs of Eq. \eqref{tadpolesintermediaA} 
receives the correction $\bar g \TrA [ P_{\text{\tiny L}} \otimes P_{\text{\tiny R}}X_c Q+
P_{\text{\tiny R}}X_c Q\otimes P_{\text{\tiny L}} ] $,
by the formula \eqref{doubleNCder} for the Hessian. However, since $P,Q$ are arbitrary, these terms cannot contribute 
to the coefficient of $\TrN P \TrN Q$ in $ \frac{1}{2} \STr (\Hess O)$,
since none of the graphs in such correction 
comply with having effective vertex (proportional to) $\TrN P \TrN Q$. Therefore
such contributions can be ignored. A similar treatment for 
a generic word $P$ and $Q$ that might contain $X_c$ concludes
also the proof of \eqref{tadpolesintermedia} without
restrictions on $P$ and $Q$ imposed above. 
\end{proof}

 \par 
In order to justify 
\eeqref{rhsW}, we now define both $\mathcal C$ and $\Gamma\inter _N [\mathbb X]$ by 
\begin{align}\label{correctC}
R_N+\Hess \Gamma_N  [\mathbb X] =:\mathcal C\inv + \Hess \Gamma\inter _N [\mathbb X]\,,
\end{align}
where $ \Gamma\inter _N [\mathbb X]$
 contains only interaction vertices (and $[\TrN X_c]^2$ counts as such; $\mathcal C$ is the correlation or 
 inverse propagator). That is, $ \Gamma\inter _N $ is defined in such a way 
that the Gaussian part $\dif \gamma\eff $ in the effective measure is factorized out:
\begin{subequations}
\begin{align}
\dif\mu\eff ( \mathbb{X} )&=  \ee^{- \Gamma_N [\mathbb X] } \dif \mathbb X\Lebesgue= \ee^{- \Gamma\inter _N [\mathbb X] } \dif\gamma\eff (\mathbb X)\,,
\\ 
\dif \gamma\eff (\mathbb X)&= \prod_{c=1}^n \ee^{- Z_c \TrN(X^2_c/2)} (\dif X_c)\Lebesgue\,.
\end{align}\label{splitmeasure}%
\end{subequations}%
Notice that one could have been tempted, inspired by \cite{FPinverse}, to separate the Hessian in its 
field-independent part (defined by its vanishing when $\mathbb X=0$)
and the field dependent part as performed in the 
functional renormalization treatment to one-matrix models by \cite{EichhornKoslowskiFRG}. The ``field part'' of the algebra $\A_{n,N}$ 
consists of non-trivial words 
(i.e. except multiples of $1_N\otimes 1_N$ and $1_N\totimes 1_N$). 
But the presence of double-trace 
quadratic operators $\frac12[\Tr (X_c)]^2 $, whose Hessian 
is $\frac12\Hess \{ (\Tr X_c)^2\} =   \diag_n [0,\ldots, 1_N\totimes 1_N ,0,\ldots, 0]$ 
with the non-zero in the $(cc)$-th entry of the $M_n$-block diagonal $\diag_n$, lies in the field-independent part,
and this impairs (as we see now) the Neumann expansion.
On the other hand, the definition \eqref{correctC} 
guarantees that the propagator $\mathcal C\inv$ is  $1_n \otimes 1_N\otimes1_N$ 
times a function (on $[1,\ldots, N]^2$), due to   
\begin{align*}\sum_{c=1}^n \Hess \Big\{\frac12  \Tr (X_c^2)\Big\}= \sum_{c=1}^n  \diag_n [0,\ldots, \underbrace {1_N \otimes 1_N }_{\text{$c$-th place}},0,\ldots, 0] = 1_n\otimes 1_N\otimes 1_N\,.\end{align*}
 When the wave function renormalization constant $Z_c$
is supposed to be equal for all matrices, $Z_c=Z$, then $\bar h_k(N,\eta)$, $\eta=-\partial_t \log Z$,
and the sums in $\bar h_k$ can be approximated by integrals of the form $\frac{1}{N^2}\int (\partial_t r_N)_{\sigma,\tau} \mathcal C^{k+1}_{\tau,\sigma}\dif \sigma\, \dif \tau $ that remain finite as $N\to \infty$.
We do not study the space of possible regulators  (in itself, interesting), 
but we stress that the expansion \eqref{rhsW} in unitary
invariants is an assumption. Ideally, as commented in \cite{EichhornKoslowskiFRG}, 
since $R_N$ breaks the symmetry, the  \eeqref{rhsW} should 
include operators $\STr(  \partial_t R_N \mathcal C  [ \Hess \Gamma_N\inter[\mathbb X] \mathcal C ]^{\star k})$.
However, identifying these operators with broken unitary symmetry is out of our present scope and for now 
the best one can do is to split, as in \eeqref{rhsW}, the rhs of Wetterich 
equation in $R_N$-dependent and $R_N$-independent part. The main result of this article is the unique description of the latter.
\begin{theorem}\label{thm:main}
For multiple-trace self-adjoint $n$-matrix ensembles,  
assume the rhs of Wetterich equation to be computable in terms of $\mathrm U(N)$-invariants as the geometric series \eqref{rhsW} in the Hessian.
Moreover, require that in \eeqref{rhsW}
only one-loop graphs are generated. Then the powers 
$(\Hess \Gamma \inter_N [\mathbb X])^{\star k} $
are taken in the algebra  
$ M_n(\mathcal A_{n,N},\star)$ of $n\times n$ matrices with entries in $\mathcal A_{n,N}$, explicitly
\begin{align}
 M_n(\mathcal A_{n,N}) = M_n(\C)\otimes \mathcal A_{n,N}\,,\quad 
 \mathcal A_{n,N}= \CnN^{\,\otimes \, 2} \oplus \CnN^{\,\totimes \, 2}\,,
\end{align}
whose product is given entry-wise by $(\mathcal P \star \mathcal Q)_{a,c} = \sum_{b=1}^n P_{a,b} \star Q_{b,c}$ for $\mathcal P=(P_{a,b})_{a,b=1,\ldots, n},$ and $ \mathcal 
 Q=(Q_{a,b})_{a,b=1,\ldots, n} \in M_n(\mathcal A_{n,N})$, and 
each entry $P_{a,b}$ and $Q_{b,c}$ obeys the following multiplication rule,
given here on homogeneous elements of $\mathcal A_{n,N}$: % 
for any $P,Q,U,W \in \Cn$,%
\begin{subequations}\label{FRGalgebra}%
\begin{align}
(U \otimes W) \star ( P\otimes Q)  &=  PU \otimes WQ\label{FRGalgebraIII} \,, \\
(U\totimes W) \star ( P\otimes Q)  &=U \totimes PWQ  \label{FRGalgebraII} \,, \\
(U \otimes W) \star ( P\totimes Q)  &= WPU \totimes Q \,,\label{FRGalgebraI} \\
(U\totimes W) \star ( P\totimes Q) &= \TrN (WP) U\totimes Q \label{FRGalgebraIV}\,.
\end{align}\end{subequations} %
\end{theorem}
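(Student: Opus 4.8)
The strategy is to reduce the statement to four ribbon-graph computations, one for each of the product rules in \eqref{FRGalgebra}, by exploiting the assumed one-loop structure. The starting observation is that the summand $\frac12(-1)^k\STr\{(\Hess\Gamma_N\inter[\mathbb X])^{\star k}\}$ in \eqref{rhsW} must, by hypothesis, be a sum of one-loop graphs in the sense of Definition \ref{def:oneloop}, and each such graph contributes its effective vertex $O_G\eff$. The idea is to run the Neumann-expansion bookkeeping in reverse: a term in the $k$-th power corresponds to a cyclic chain of $k$ Hessian blocks $\Hess_{a_1,a_2}O^{(1)}\star\Hess_{a_2,a_3}O^{(2)}\star\cdots\star\Hess_{a_k,a_1}O^{(k)}$ glued head-to-tail (the cyclic gluing being implemented by $\STr$, cf. \eqref{Supertrace}), and by \eqref{doubleNCder} each factor $\Hess_{a,b}O^{(i)}$ opens the interaction vertex $O^{(i)}$ into either a pair of ordinary words $F\otimes G$ (single-trace case, the two words being the inner/outer arcs created by excising the two marked matrices from a cyclic word) or, via Definition \ref{def:HessondoubleTrs} and \eqref{bitraces_entries}, into a $\totimes$-element $\Day_{X_a}P\totimes\Day_{X_b}Q$ coming from the double-trace piece. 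The single remaining loop of $G^\circ$ forces exactly one connected cycle through all $k$ propagators, and reading off the effective vertex means reading the inner boundary word and the outer boundary word along that cylinder (Figures \ref{fig:ideal}, \ref{fig:notideal}).

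Concretely I would argue case by case. \textbf{(i)} For \eqref{FRGalgebraIII}: if two consecutive Hessian blocks are both of single-trace type, $(U\otimes W)$ from the first and $(P\otimes Q)$ from the second, gluing the propagator that joins them forces the outer arc of one to concatenate with the outer arc of the other and likewise for the inner arcs, but with the orientation reversal already visible in the single-trace product recalled in the introduction; tracking which arc is "inner" and which is "outer" through the clockwise vertex orientation (Examples \ref{ex:orienta}, \ref{ex:disconn_as_graph}) yields $PU\otimes WQ$. \textbf{(ii)} For \eqref{FRGalgebraI}: an $(U\otimes W)$ block followed by a $(P\totimes Q)$ block — here the $\totimes$-element represents two boundary words that live on two different traces of the effective vertex, i.e. the cylinder acquires the extra handle/tube recorded by $\totimes$; gluing forces the $U,W$ arcs and the $P$-arc to merge into one boundary word $WPU$ on the $\totimes$-tube while the $Q$-arc stays as the second $\totimes$-slot, giving $WPU\totimes Q$. \textbf{(iii)} For \eqref{FRGalgebraII}: a $(U\totimes W)$ block followed by $(P\otimes Q)$ — the single-trace block's two arcs get absorbed \emph{inside} the second $\totimes$-slot of the first block (the $P,W,Q$ arcs concatenate as $PWQ$ on that slot), leaving $U\totimes PWQ$, consistent with the bilinearity of $\totimes$. \textbf{(iv)} For \eqref{FRGalgebraIV}: two $\totimes$-blocks in a row — gluing the matching slots closes off a whole loop, producing the numerical factor $\TrN(WP)=\TrN 1_N$-type contraction exactly as the empty inner loop produced the factor $N$ in Example \ref{ex:Nfactor}, leaving $\TrN(WP)U\totimes Q$. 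In each case the trace rules \eqref{FRGtraces} (established in the preceding Lemma via tadpoles) fix how the closed-off loop is evaluated, ensuring $\STr$ of the glued object reproduces the claimed effective vertex.

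Finally I would check consistency: the identity-element computation \eqref{identityonAn} shows $1_N\otimes 1_N$ is the two-sided unit, which must be (and is) compatible with \eqref{FRGalgebraIII}; the cut behaviour \eqref{NotQuiteidentityonAn} of $\frac12[\Tr X_c]^2$ must match plugging $1_N\totimes 1_N$ into \eqref{FRGalgebraI}--\eqref{FRGalgebraII}, which it does, and this pins down $\totimes$ up to the bilinearity relation already forced by the coupling constants. Associativity of $\star$ need not be proved independently: it follows because both sides compute the effective vertex of the same ribbon graph obtained by gluing three (or more) blocks in a chain, the effective-vertex construction being manifestly independent of the order in which propagators are traversed. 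The main obstacle I anticipate is the orientation bookkeeping in case (i) and case (iii): making rigorous, rather than picture-level, the claim that the clockwise orientation of interaction vertices induces a consistent orientation of the one-loop cylinder and hence an unambiguous "inner word vs. outer word" assignment — this is where the inversion $U\mapsto$ (left factor) in \eqref{FRGalgebraIII} and the asymmetry between \eqref{FRGalgebraI} and \eqref{FRGalgebraII} genuinely come from, and it requires care to state the gluing rule so that the two indistinguishable-graph situations (as in the Lemma's tadpoles) are correctly identified rather than double-counted.
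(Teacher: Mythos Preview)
Your plan correctly identifies the four cases and the overall picture (cyclic chain of Hessian blocks, effective vertex read off via the supertrace), but there is a genuine gap at the core step. You propose to read the product $(U\otimes W)\star(P\otimes Q)$ directly off the open one-loop segment obtained by gluing two Hessian blocks at a single propagator. However, the one-loop hypothesis only constrains the \emph{trace} $\Tr_{\A_n}$ of a product, via the effective vertex of a \emph{closed} loop---never the product element itself. From a two-vertex loop you learn, e.g., only that $\Tr_{\A_n}[(U\otimes W)\star(P\otimes Q)]=\TrN(PU)\TrN(WQ)$, and this is compatible with $PU\otimes WQ$, $UP\otimes QW$, $PU\otimes QW$, $UP\otimes WQ$, and even $\TrN(PU)\,W\totimes Q$. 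Likewise in your case~(iv) you assert the closed-off loop yields $\TrN(WP)$, but $\TrN(QU)$ is equally consistent at the two-vertex level. Your ``inner arc vs.\ outer arc'' reading presupposes a canonical dictionary between open ribbon segments and elements of $\A_n$; that dictionary is exactly what is at stake, and the slot-swap ambiguity you flag as a bookkeeping ``obstacle'' is not a nuisance but the heart of the problem.

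The paper's proof takes a different route: rather than a direct reading, it runs an elimination argument. For each case it first writes down \emph{all} products compatible with the two-vertex trace constraint (up to twenty candidates in one case), then constructs specific three- and four-vertex one-loop graphs whose effective vertices rule out the wrong ones. Crucially the cases are interlocked: partial conclusions from Case~IV feed into the elimination in Cases~I--III, and the final twofold ambiguities in Cases~III and~IV ($PU\otimes WQ$ vs.\ $UP\otimes QW$, and $\TrN(WP)\,U\totimes Q$ vs.\ $\TrN(PW)\,Q\totimes U$) are resolved \emph{simultaneously} by a single four-vertex graph that mixes both. Your plan, as written, would stall precisely where several trace-equivalent candidates survive; to push it through along your lines you would have to formalize the open-segment\,/\,$\A_n$-element correspondence and prove it is well-defined and consistent with \eqref{doubleNCder} and \eqref{bitraces_entries}---which, unwound, reproduces the paper's bootstrap.
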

\begin{proof}
Section \ref{sec:proof} is the proof. 
\end{proof}

In other words, if one computes functional renormalization of matrix models
with a product different from 
\eeqref{FRGalgebra}, \textit{either} contributions that do not have the one-loop structure 
appear in the $\beta$-functions \eqref{betaeta}, \textit{or} it is impossible to 
compute the renormalization flow by splitting, in regulator-dependent and regulator-independent parts
as in \eqref{rhsW}---regardless of what $\bar h_k$ might be.

\begin{remark} \label{rem:questions}
There are two interesting limiting cases\footnote{I thank R\u azvan Gur\u au for questions that motivated this remark (which gives partial answers).}, large-$N$ (together with the 
initial scale of the bare action $N_\infty \to \infty$) and small-$N$.
From Figure \ref{fig:ideal} it is evident that $N$-factors 
appear only when one-loop graphs have the
``empty word'' $1_N$ at any side. This suggests that the algebra 
of Theorem \ref{thm:main} could be reduced to \eeqref{FRGalgebraIII}, 
but actually double-traces appear again in \eeqref{FRGalgebraIV}, and 
$\TrN(Q_1)\times \TrN(Q_2)$ compete with terms of the form $N\TrN(P)$.
Further, this argument should be thoroughly investigated, since 
the ensemble in the large-$N$ depends also on the power-counting, that is, 
on the solution for the 
$\kappa_\alpha$ and $\lambda_\alpha$; see the discussion just above \eeqref{betaeta}. 
For $\beta$-functions computed with the algebra \eeqref{FRGalgebra}, see \cite[Thm. 7.2]{FRGEmultimatrix} and \cite{komentarzABAB}.
The critical behavior could be explored in the sense of \cite{EichhornKoslowskiFRG}
as eigenvalues of the stability matrix, namely $-\mathrm{Eig}\big\{(\partial \beta_\alpha (\eta^{\!\!\balita\!\!}, \{g^{\!\!\balita\!\!}\}/\partial g_{\alpha'} )\big\}_{\alpha,\alpha'}$, where the bullet means the fixed-point solutions  of the system \eqref{betaeta}, $\beta_{\alpha}(\eta^{\!\!\balita\!\!}, \{g^{\!\!\balita\!\!}\})= 0$ and $\eta_{c}(\eta^{\!\!\balita\!\!}, \{g^{\!\!\balita\!\!}\})=0$ for all 
interactions $\alpha$ and all matrices $c=1,\ldots,n$. In the large-$N$, 
for the two-matrix model with 48 operators (that is the number of operators in a sextic truncation) compatible with the symmetries of the $ABAB$-model, the unique fixed point solution with a single positive eigenvalue of the stability matrix happens when two coupling constants have the value $0.07972$ 
($1/4\pi = 0.07957...$ is the critical value for the coupling constants in \cite{KazakovABAB},
when one takes their sign and normalization conventions) and 
some double-trace operators like $\Tr^2_N(A),\TrN^2(A^2)$, $\TrN(A)\times\TrN(B^3)$, do contribute to the flow (at least
so with the regulator of App. \ref{app:RN}). The limit $N\to 1$  ($t\to 0$) should yield the full effective action (see limits in App. \ref{app:RN}), but this is unexplored here and needs an independent study.   
In the worst of the cases, the full algebra \eqref{FRGalgebra} is needed
to next-to-leading-order or \textsc{nlo} corrections, but bounds on those \textsc{nlo}-terms are precisely
the beginning of an analytic approach. 
\end{remark}

\begin{remark}[The product $\star$ in terms of matrix entries]
Consider the permutation $\tau=(13)\in \mathrm{Sym}(4)$
and denote by $\mathrm{id}      $ the identity of the symmetric group $ \mathrm{Sym}(4)$.
Let $\rho,\pi \in \{ \mathrm{id}, \tau\}$.
Then, if $a,b,c,d=1,\ldots,N$, and $Y_1,Y_2,Y_3,Y_4\in \Cfree{n}\subset \MN$ are monomials, 
the four products of Theorem \ref{thm:main} are summarized in the following equation,
where the sum over $x,y=1,\ldots ,N$ is implicit: \begin{align}
(Y_1 \otimes_\rho Y_2) \star (Y_3\otimes_\pi Y_4)_{ab;cd}=
(Y_1)_{\rho(a)\rho(b)} (Y_2)_{\rho(x)\rho(y)} (Y_3)_{\pi(y)\pi(x)} (Y_4)_{\pi(c)\pi(d)}
\end{align}
where for $\varpi\in \{ \mathrm{id}, \tau\}$, $\otimes_\varpi=\otimes$ if $\varpi=\tau=(13)$ and 
$\otimes_\varpi=\otimes_{\mathrm{id}}=\boxtimes$ if $\varpi$ is the trivial permutation.
Also $\rho$ acts as element of $\mathrm{Sym}(a,b,x,y)$ and $\pi $ on $\mathrm{Sym}(y,x,c,d)$.
For instance, in the nontrivial case $\rho=\tau$, $\tau(a,b,x,y)=(x,b,a,y)$.
We remark that in order to keep the Hessian 
simple in this paper, the convention is the opposite of \cite{FRGEmultimatrix},
i.e. $\otimes_\tau$ there is $\otimes$ here; and the $\otimes$ of \cite{FRGEmultimatrix} corresponds 
with the $\totimes$ here. The particular permutation $\tau=(13)$ might seem at first
arbitrary, but it is actually natural and
can be found in \textit{op.cit.} or in \cite[Eq. 5]{GuionnetFreeAn}.
\end{remark}

\section{The proof of the main statement}\label{sec:proof} 
% The proof will be split into four cases, which, although disjoint,
% give each other feedback. Each case determines, autonomously,
% which is the trace of each one of the possible
% four products in $\A_n$, but to determine the
% element of $\A_n$ inside the trace of a certain type of product, information from other
% cases is needed.
% In order to present the proof with a better narrative (instead of alternation of cases), we will present it by separate case.
Figure \ref{fig:topoofproof} gives the logic structure in the proof.
By ``$s \subset \Hess_{a,b} ( O)$'' we abbreviate that $s$ is a summand in $\Hess_{a,b} (O)$. Further,
 $M,L,P,Q,R,S,T,U,V,W\in \Cn$ are arbitrary monomials.

\begin{proof*}{Proof of Theorem \ref{thm:main}}
% \allowdisplaybreaks[3]
Start with the $k$-th power of a Hessian. 
First, we argue that we can simplify
this situation and deduce the behavior 
regarding the $k$-th power for any $k$ from the square of a Hessian. 
Supertraces of products of Hessians will be sums over terms of the 
following form:
\begin{equation} \label{HessianwithGraph}
  \Hess_{  a,b}{(O_1) } \star
  \Hess_{  b,c}{ (O_2 )}\star 
  \Hess_{  c,d}{ (O_3) }\star \ldots\star \Hess_{ *, a} { (O_k) } \supset
 \includegraphicsd{.30}{Associativity_multimatrix}
 \end{equation}

The associativity of the product $\star$ follows from the definition
of effective vertices (but should be verified purely algebraically, 
after the product is constructed):
\begin{equation}
 \label{graphAssoc}
   \includegraphicsd{.30}{Associativity_multimatrixL}
\,\,\text{ \Large $=$ }
 \,\, \includegraphicsd{.30}{Associativity_multimatrixR}
\end{equation}
where the gray boxes with uncontracted, protruding ribbon edges
mean the new interactions formed from the two grouped interaction vertices.
The new cyclic order is determined by the propagator, together with the half-edges it is attached to, being shrunk.
The left corresponds to the 
$[\Hess_{a,b} (O_1) \star \Hess_{b,c} (O_2) ] \star \Hess_{c,d}(O_3) $ 
bracketing while the right one to 
$ \Hess_{a,b} (O_1) \star [ \Hess_{b,c} (O_2)  \star \Hess_{c,d}(O_3)]$.  \par
We have four cases, depending on the way
the four propagators in the loop connect the
interaction vertices of $k=2$ interaction vertices.
 The fact that $\mathscr A=M_n(\A_n)$ is an associative algebra
 (or recursive application of \eqref{graphAssoc}) allows us not to consider more cases.
 However, to determine the product, $k=3,4$ will yield
also useful information too.
% Applying the one-loop assumption, thanks to the associativity,
% the definition of effective vertex allows to reduce as
% in \eeqref{graphAssoc} stepwise pairs of vertices until
% $k=2$ is reached. We have four cases, depending on the way
% the four propagators in the loop connect the
% interaction vertices of $k=2$ interaction vertices. The
% fact that $\mathscr A=M_n(\A_n)$ is an associative algebra
% allows for this recursion.
% We assume then w.l.o.g. that this is the case.  However, to determine the product, $k=3,4$ will yield
% also useful information too.
\begin{figure} 
\centering 
     \begin{subfigure}[b]{0.70\textwidth}
    \hspace{-2cm}
 \includegraphics[width=1.29\textwidth]{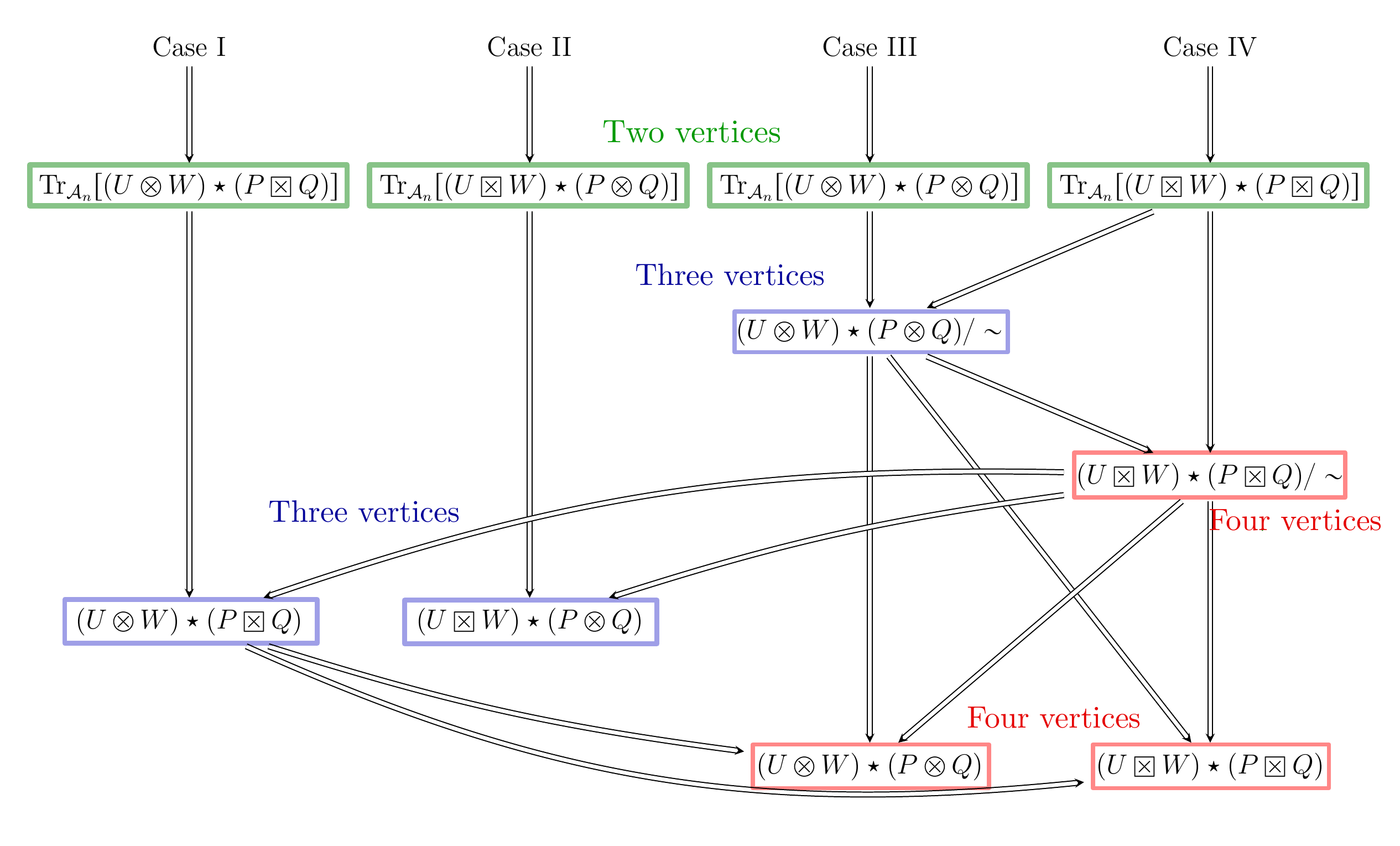}
 \caption{In the first row (top to bottom), we have the four different cases
 in one-loop diagrams in multi-trace matrix models. From each,
 we can recognize a product type in the lhs of \eeqref{FRGalgebra}.
 Starting from each case one can, autonomously, determine the value for the trace $\TrA$ of each 
 of these products using two interaction vertices. As one wants to determine the product itself, more 
 information (which turns out to be delivered by looking
 at other cases) is needed:
 For instance, knowing both traces for cases III and IV, one can determine the product in case III 
 up to the transformation  $X\otimes Y + V\totimes Z \mapsto \widetilde{X\otimes Y} + \widetilde{ V\totimes Z} = Y\otimes X + Z\totimes V$ (which is marked with ``modulo $\sim$'') by using a third interaction vertex, et cetera, as determined by this diagram. Observe that, although we combine the cases, 
 and present each column at once, 
 there is no (oriented) loop.\label{fig:logic_diagram}}
% \end{figure}
% \begin{figure}[hb]
\end{subfigure}%
 \\[3.5ex] \centering %
     \begin{subfigure}[b]{0.7\textwidth}\centering    %
\begin{tikzcd}[column sep=normal ]  
\text{Case I} \ar[d, double, double distance=2pt,, -stealth]  & \text{Case II}\ar[d, double, double distance=2pt,, -stealth] & \text{Case III} \ar[d, double, double distance=2pt,, -stealth] & \text{Case IV} \ar[d, double, double distance=2pt,, -stealth] \\
\texteqref{CITr} \ar[ddd,double, double distance=2pt,-stealth] 
& \ar[ddd,double, double distance=2pt,-stealth] \texteqref{CIITr}
&\texteqref{CIIITr} \ar[d,double, double distance=2pt,-stealth]&  \ar[dd,double, double distance=2pt,-stealth]\texteqref{CIVTr} \ar[ld,double, double distance=2pt,-stealth]\\ 
& &   \texteqref{CIIISym} \ar[dr,double, double distance=2pt,-stealth] \ar[ddd,double, double distance=2pt,, , -stealth]  \ar[dddr,double, double distance=2pt,, -stealth]&  \\ 
& &   &  \texteqref{CIVSym} \ar[dll,double, double distance=2pt,-stealth,bend right=5] \ar[dlll,bend right=10,double, double distance=2pt,-stealth] \ar[dd,double, double distance=2pt,, , -stealth]  \ar[ddl,double, double distance=2pt,, , -stealth] \\ \texteqref{FRGalgebraI} \ar[drrr,bend right=20,double, double distance=2pt,-stealth]  \ar[drr,bend right=5,double, double distance=2pt,-stealth] & \texteqref{FRGalgebraII}  & &  \\
& & \texteqref{FRGalgebraIII} &  \texteqref{FRGalgebraIV} 
\end{tikzcd}  \caption{The equation numbers
corresponding with Figure \ref{fig:logic_diagram}.\label{fig:logic_diagram_withEqs}}
\end{subfigure}
\caption{The ``topology'' of the proof of Theorem \ref{thm:main} showing the 
absence of logic loops, notwithstanding the mix of cases in the proof. The 
arrows are implications. These diagrams show 
how we ``bootstrap'' the algebra.  \label{fig:topoofproof}}
\end{figure}

\begin{itemize}\setlength\itemsep{.4em}%

 \itemb \textsc{Case I:} \textit{When two ribbons in the loop lie in the same 
 trace in the first interaction vertex, but in different traces in the second:}

\begin{align}\raisebox{-.45\height}{ \includegraphics[width=4.3cm]{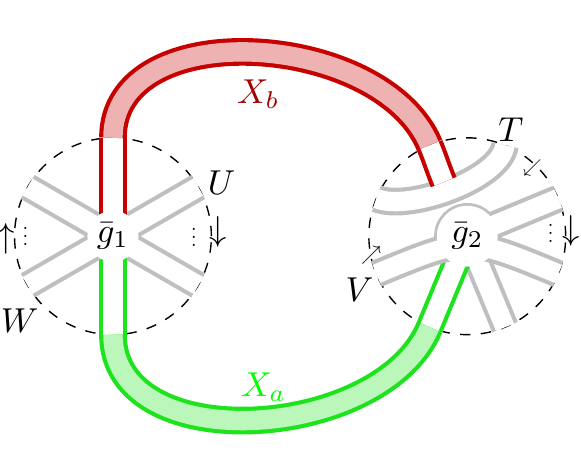}}\subset \Hess_{a,b} O_1 \star \Hess_{b,a} O_2\label{CaseI}
\end{align}
 
 Suppose that the interaction vertices  have one and two traces, respectively. 
  In fact they might have more traces, but these not being implied
 in the loop for the present case, they remain intact; thus, we do not loose generality 
 by this simplification. There exist then words $T, U,V ,W$ (which might be empty) such that  
\begin{align*}O_1 = \bar g_1 \Tr (UX_a W X_b )\quad \and \quad O_2=\bar g_2 \Tr (  X_b  T) \Tr (  X_a  V) \,.\end{align*}
 The words $T,U,V,W\in \Cn$ might contain 
 the letters $X_a, X_b$, but we are analyzing only the summand in the 
 lhs of \eqref{CaseI}.  
 To compute the contribution of the two Hessians to this precise
 summand we get by \eeqref{doubleNCder} $ \Hess_{a,b} O_1 \supset \partial_{X_a} \circ 
 \partial_{X_b } \Tr (U X_a W X_b) = U\otimes W$,
 and by \eeqref{bitraces_entries}, \begin{align*} \Hess_{b,a} O_2 \supset\Day_{X_b} \Tr (  X_b  T) \totimes  \Day_{X_a} \Tr (  X_a  V) = T\totimes V \,. \end{align*}  Now, since the effective vertex of  \eqref{CaseI} is formed by  shrinking the green and red propagators and merging the rest of the 
 ribbon half-edges while preserving the order, the graph \eqref{CaseI} implies that the effective vertex is $\bar g_1 \bar g_2\TrN(W T U V ) $. By Wetterich equation, 
 \begin{align} \label{CITr}
 \Tr_{\A_n}[ (U\otimes W) \star (V\totimes T) ] = \TrN(W T U V ) \,.
 \end{align}
 Since the lhs is a single trace, this is enough to conclude that the result of $ (U\otimes W) \star (V\totimes T)$ must be ``a $\totimes$ inserted somewhere
 in the cyclic word $ W TU V$'', otherwise it would be a product
 of the form $w_1\otimes w_2$ which, when traced, would yield a $N$-factor,
 in case that any of the words $w_1$ or $w_2$ is trivial, and
 a double trace if both are not trivial.
 We also know that the result of $(U\otimes W) \star (V\totimes T)$ must be an ordinary and not a cyclic word; thus, so far, we need to know how to root it, i.e. the expression for $(U\otimes W) \star (V\totimes T)$ 
 should be listed in 
\begin{align}
\quad 1& \totimes  WTUV ,& W& \totimes  TUV,& WT& \totimes  UV,& WTU& \totimes  V,& WTUV& \totimes  1\,, \nonumber \\
1& \totimes  TUVW,& T& \totimes  UVW,& TU& \totimes  VW,&TUV& \totimes  W,& TUVW & \totimes  1\,, \label{possib_prods_I} \\
1& \totimes  UVWT,& U& \totimes  VWT,& UV& \totimes  WT,&UVW& \totimes  T,& UVWT& \totimes  1\,, \nonumber\\
1& \totimes  VWTU,& V& \totimes  WTU,& VW& \totimes  TU,&VWT& \totimes  U,& VWTU& \totimes  1\,. \nonumber
\end{align} 
 To discard 
the wrong ones, we first  consider the following interaction vertices: 
\begin{align*}
O_1&= \bar g_1 \TrN (X_bW X_a U)\,, \\ O_2&= \bar g_2 \TrN (X_b T) \TrN( X_c V) \,,\\
 O_3&=\bar g_3   \TrN (X_c R) \TrN(S X_a)\,. 
\end{align*}
and the corresponding product of Hessians of each of these (in that order), which 
contains in particular, the next graph:
     \begin{align*}\raisebox
    {-.45\height}{
 \includegraphics[width=4cm]{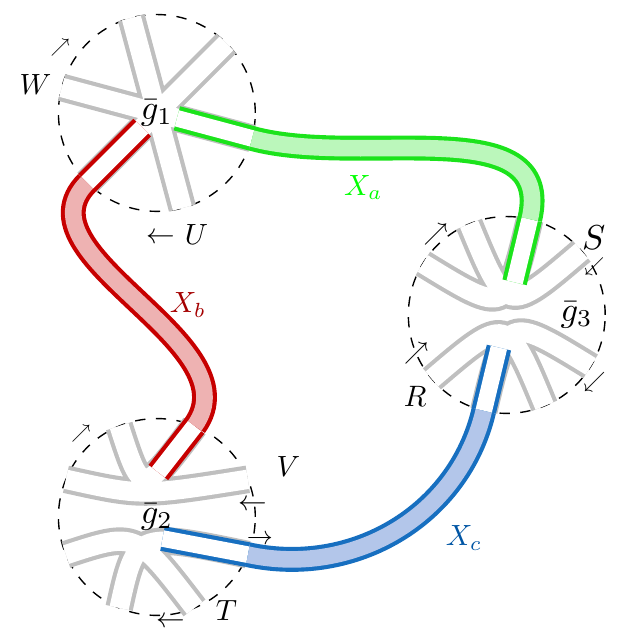} }\subset \Hess_{a,b} O_1 \star 
 \Hess_{b,c} O_2 \star 
 \Hess_{c,a} O_3
 \end{align*} 
The effective vertex must be $\TrN(VWSU) \TrN(RT)$, thus 
\begin{equation}
 \label{TraceMussI}
\TrA \big\{ [( W\otimes U ) \star (V\totimes T)] \star (R\totimes S)\big\} = \TrN(VWSU) \times \TrN(RT)
\end{equation} 
One can use the previous graph to discard elements in the list \eqref{possib_prods_I}.
For instance, we suppose that  $( W\otimes U ) \star (V\totimes T) = W \totimes  TUV$.
For the product inside curly 
brackets $\{ \ldots\}$,  using \eeqref{IVc} or \eeqref{IVd}, (equivalently, \eeqref{CIVSym}; \ref{fig:logic_diagram_withEqs}) one gets the following 
possibilities:
\begin{equation}
 =\begin{cases}
  S\totimes W \TrN(RTUV) &  \text{if \eeqref{IVc} holds}\,, \\
   W\totimes S \TrN(RTUV) & \text{if \eeqref{IVd} holds}\,.
  \end{cases}
\end{equation}
But the trace of it yields in either case $\TrN(SW) \TrN(RTUV)$
which differs from \eeqref{TraceMussI}. Thus 
$( W\otimes U ) \star (V\totimes T) = W \totimes  TUV$ is impossible.
By the same token, with the same counterexample above, one discards 
the possibilities that do not contain a factor of the empty word $1$, except $( W\otimes U ) \star (V\totimes T) =UVW\totimes T$.

Regarding those possibilities containing the factor of $1$, following any of the prescription 
of the  leftmost column in \eqref{possib_prods_I} for the square brackets product, and the Case IV,
which is to say either \eeqref{IVc} or \eeqref{IVd},
for the resulting multiplication of the form $w_1\totimes w_2 \star w_3\totimes w_4 $, one easily sees that these
generate a factor $\TrN(S)$; likewise, those possible products on the rightmost columns \eqref{possib_prods_I}
generate a factor $\TrN (R)$. Both lead then to contradiction with the previous graph. 
Therefore indeed $( W\otimes U ) \star (V\totimes T) =UVW\totimes T$, i.e. \eeqref{FRGalgebraI} holds.
\par

  \itemb \textsc{Case II:}  \textit{When two ribbons in the loop lie in the same 
 trace in the first interaction vertex, but in different traces in the second:}
\begin{align}\label{CaseIIgraph}
 \includegraphicsd{.26}{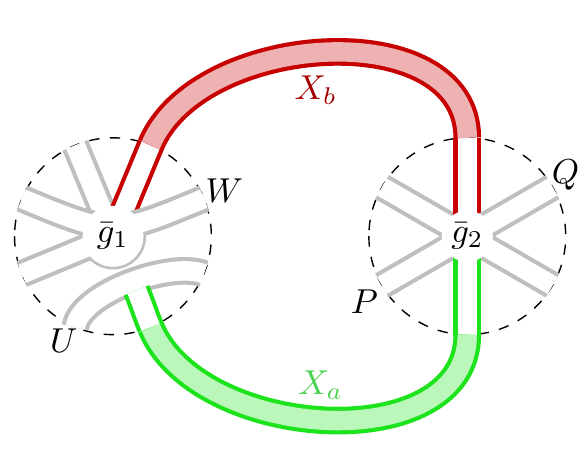} \subset 
 \Hess_{a,b} O_1 \star 
 \Hess_{b,a} O_2  
 \end{align} 
This case is proven by swapping the roles of the first and second
interaction vertices in Case I. Since the proof is analogous,
we rather sketch it. Take the next operators:
\begin{align*}O_1= \bar g_1 \TrN(X_aU)
\Tr(X_b W)\,, \quad O_2= \bar g_2 \TrN(X_a P X_b Q)\,. \end{align*}
Since the trace over $\A_n$ must coincide with the effective vertex to \eqref{CaseIIgraph}, \begin{align}
\TrA\big( U\totimes W \star P\otimes Q  \big) =
\TrN(QUPW)                               \label{CIITr}
                              \end{align} Again, 
we have the following possibilities for the value of the element of $\A_n$ inside
the trace:
\begin{align}
1& \totimes  UPWQ ,& U& \totimes  PWQ,& \cdots&  & UPWQ& \totimes  1\,, \nonumber \\
 1& \totimes QUPW , &  Q&\totimes UPW ,  & \cdots & &   QUPW& \totimes 1\,,  \label{productchoicesII} \\ 
 & \vdots & &\ddots  && &\vdots \nonumber \\ 
1& \totimes  PWQU ,& U& \totimes  PWQ,& \cdots&  & UPWQ& \totimes  1\,. \nonumber
\end{align} 
Call $ U\totimes W \star P\otimes Q= J(W,T,U,V) \totimes K(W,T,U,V)$  the correct product listed here. 
By Case IV's partial conclusions, to wit \eeqref{IVc} and \eeqref{IVd}, one has 
\begin{align} J\boxtimes K \star R \totimes S  =\begin{cases} \TrN(K R) J\totimes S \\
\TrN(K R) S \totimes J \label{algebraicII}
\end{cases} \stackrel{\TrA}\mapsto \TrN(JS) \TrN(KR)\,.
\end{align}   
Now consider three vertices,
\begin{align*}\qquad
O_1= \bar g_1 \TrN(X_aU)
\Tr(X_b W)\,, O_2= \bar g_2 \TrN(X_c P X_b Q)\,, O_3=\bar g_3 \TrN (X_c R ) \TrN(S X_a) \,, 
\end{align*}
and the product of the Hessian applied to these.
By looking at the graph, 
\begin{align*}\includegraphicsd{.2}{CounterexampleII_annotated}
\subset 
\Hess_{a,b} O_1 \star 
 \Hess_{b,c} O_2 \star 
 \Hess_{c,a} O_3\,,
\end{align*}
one deduces that $\TrA [  U\totimes W \star P\otimes Q \star R \totimes S]=\TrN (SU) \TrN(PWQR) $.
Since this holds for each $W,S,Q,R,P,U\in \Cn$, comparing with \eeqref{algebraicII},
we obtain that  $J=U$ and $K= PWQ$ is the right choice among \eqref{productchoicesII},
which means that $U\totimes W \star P\otimes Q=U \totimes PWQ$ and we have 
proven \eeqref{FRGalgebraII}.

 \itemb \textsc{Case III:}\textit{ When two ribbons in the loop lie on the same 
 trace in both the first and second interaction vertices:}

\begin{align}\raisebox{-.65\height}{
 \includegraphics[width=4cm]{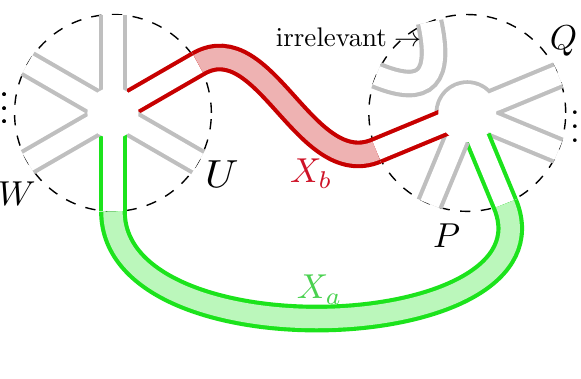}} \,\,\subset \,\,\Hess_{a,b} O_1 \star 
 \Hess_{b,a} O_2 \,.\label{CaseIV}
 \end{align}   We consider operators
$O_1= \bar g_1 \TrN (X_aW X_b U)  $ and $O_2= \bar g_2 \TrN (X_a P X_b Q) $.
These might have more traces, but as depicted above,  these being outside the loop,
do not suffer any transformation (in that summand) and can be ignored. Then  
$\Hess_{a,b} O_1 \star \Hess_{b,a} O_2 = (U\otimes W) \star (P\otimes Q)$.
According to Wetterich equation, the effective vertex
must be 
\begin{equation} \label{CIIITr}
\Tr_{\A_n}[(U\otimes W) \star (P\otimes Q)] =\TrN (PU) \TrN (WQ).   
\end{equation}
which implies either of the following possibilities:
\begin{subequations}\label{possib_prods_III}
\begin{align}
(U\otimes W) \star (P\otimes Q)& = \TrN(PU) W\totimes Q \label{IIIa} \\
(U\otimes W) \star (P\otimes Q)& = \TrN(PU) Q\totimes W \label{IIIb}  \\
(U\otimes W) \star (P\otimes Q)& = PU \otimes QW   \label{IIIc} \\
(U\otimes W) \star (P\otimes Q)& = UP \otimes QW \label{IIId}\\
(U\otimes W) \star (P\otimes Q)& = UP \otimes WQ \label{IIIe}\\  
(U\otimes W) \star (P\otimes Q)& = PU \otimes WQ\label{IIIf} 
\end{align}
\end{subequations}
To obtain the right one(s), we consider now the third power
of the Hessian, but in the contraction with the additional vertex
Case IV shall be here useful. By contradiction to each of the cases, we suppose that \eeqref{IIIa} holds.
Then consider the following interaction vertices:
\begin{align}
O_1&= \bar g_1 \TrN (X_aW X_b U)\,, \\ O_2&= \bar g_2 \TrN (X_b Q X_c P) \,,\\
 O_3&=\bar g_3   \TrN (X_c R) \TrN(S X_a)\,. 
\end{align}
By a similar ribbon graph argument,
we obtain, using the hypothesis,
that $(\Hess_{a,b} O_1 \star \Hess_{b,c} O_2 )\star \Hess_{c,a} O_3 $ which is, modulo the coupling constants $
[(U\otimes W) \star (P\otimes Q)]\star (R\totimes S)= [\TrN(PU) W\totimes Q]
\star (R\totimes S) $. Applying $\Tr_{\A_n}$ to this quantity we deduce, 
according to the partial conclusion in Case IV (recalling that the 
bracketing is irrelevant due to \eeqref{graphAssoc}), that
\begin{align}
 \qquad\Tr_{\A_n}\big\{ (U\otimes W) \star (P\otimes Q) \star (R\totimes S)\big\}  & = \TrN(PU) \Tr_{\A_n}\big\{
 (W\totimes Q ) \star  (R\totimes S)\big\} \\ & =\TrN(PU) \TrN(QR) \TrN(WS).
\end{align}
However, by looking at the graph that the product 
in curly brackets represents, the previous equation
cannot be true, for the graph leads to a single trace,
namely, $\TrN(WQRPUS)$. This is a 
contradiction with the supposition that 
$(U\otimes W) \star (P\otimes Q) = \TrN(PU) W\totimes Q$.
Hence, we discard  \eeqref{IIIa}. 
By the same argument in number of traces, we discard
also \eeqref{IIIb}. Further, with the next counterexample 
\begin{equation}
\label{CounterGraphIII}
\includegraphicsd{.22}{CounterexampleIII_resp_annotated_correct}\subset  
 \Hess_{a,b} O_1 \star 
 \Hess_{b,c} O_2 \star \Hess_{c,a} O_3 \,,
\end{equation}
obtained from the Hessians of the operators
\begin{align*} 
O_1=  \bar g_1 \TrN (X_aW X_b U)\,, \,\, O_2 =  \bar g_2  \TrN (X_b Q X_c P)\,,\,\,
 O_3= \bar g_3 \TrN (X_a R X_c S )\, . 
\end{align*}
 
Then two further possibilities are ruled out, for, on the one hand, \eeqref{IIIc} implies that the effective vertex is $\TrN(RPU)\TrN(SQW)$;
and \eeqref{IIIe}, on the other hand,
implies that it is $\TrN (UPR)\TrN(WQS)$  (mod coupling constants).
Either is different from the effective vertex for the graph \eqref{CounterGraphIII}, namely 
 $\TrN( WQS )\TrN(RPU)$. So only the next two are possible:
\begin{align}
(U\otimes W) \star (P\otimes Q)  = UP \otimes QW  \qquad   
(U\otimes W) \star (P\otimes Q)  = PU \otimes WQ\,. \label{CIIISym} 
\end{align}

 We solve now Case IV and then determine which of the two is the right expression.

   \itemb \textsc{Case IV:}  \textit{When two ribbons in the loop lie on 
   different traces in both interaction vertices:}
 \begin{align}\label{CaseIV}\raisebox{-.45\height}{ 
 \includegraphics[width=4.0cm]{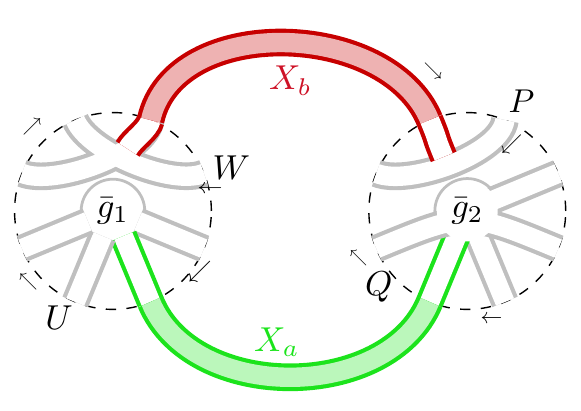}}\subset \Hess_{a,b} O_1 \star \Hess_{b,a} O_2\end{align}
 Notice that, notwithstanding the disconnectedness of the ribbons 
 in this graph $G_{\textsc{iv}}$,
 as a graph in the field theory context, what matters
 is the connectivity of its skeleton $G_{\textsc{iv}}^\circ$ ($\bar g_i$
 are the coupling constants for both traces inside the dashed
 circle, cf. Def \ref{def:oneloop}). We construct now operators  
 that yield the desired product. Let $O_1= \bar g_1 \TrN(UX_a) \TrN(X_b W) $
 and $O_2= \bar g_2 \TrN(QX_a) \TrN(X_b P) $. Then
 the product of Hessians in \eeqref{CaseIV} contains
  $ (U\totimes W) \star ( P\totimes Q)$ as a summand. 
  The effective vertex must be 
  what we obtain by shrinking the propagators. In turn, in the rhs of   
  Wetterich Equation \eqref{Wetterich} this effective vertex
  is obtained by tracing over\footnote{Actually one 
  has to trace over $M_n (\A_n)$, i.e. take the supertrace.
  But the trace corresponding to the $M_n$-block matrix  
  was already taken in \eeqref{CaseIV}. } $\A_n$, so 
\begin{align}
 \label{CIVTr}
  \Tr_{\A_n}[(U\totimes W) \star ( P\totimes Q)] =
  \TrN ( WP ) \times \TrN (UQ  ) \,.
\end{align}

 This means that the quantity in square brackets
 must be either of the following product formulas for $\star$: 
 \begin{subequations}
 \begin{align}
 (U\totimes W) \star ( P\totimes Q)  & =   P\totimes W \TrN(QU) \label{IVa} \\
 (U\totimes W) \star ( P\totimes Q)  & =  W\totimes P \TrN(QU)  \label{IVb} \\
  (U\totimes W) \star ( P\totimes Q)  & =  Q\totimes U \TrN(PW)  \label{IVc} \\
   (U\totimes W) \star ( P\totimes Q)  & =   U\totimes Q \TrN(PW)   \label{IVd} \\
    (U\totimes W) \star ( P\totimes Q)  & = PW \otimes QU   \label{IVe} \\
     (U\totimes W) \star ( P\totimes Q)  & = WP \otimes QU   \label{IVf} \\
      (U\totimes W) \star ( P\totimes Q)  & =  WP \otimes UQ   \label{IVg} \\
         (U\totimes W) \star ( P\totimes Q)  & = PW \otimes UQ   \label{IVh} 
 \end{align}
\end{subequations}
To determine the correct product, we consider higher powers of
 the Hessian, and one of the partial conclusion of the Case III, \eeqref{CIIITr}.\par 
By contradiction, suppose that \eeqref{IVa} holds. 
Then applying twice this equation,
\begin{align}
[(U\totimes W) \star ( P\totimes Q) ] \star (R\totimes S)& = \TrN(QU)  P\totimes W \star R\totimes S \\
& =  \TrN(QU) \Tr(SP) R \totimes W 
\end{align}
which when is traced in $\A_n$ yields 
\begin{align}
\Tr_{\A_n} \big\{ [(U\totimes W) \star ( P\totimes Q) ] \star (R\totimes S) \big\} =\TrN(QU ) \TrN(RW) \TrN(SP)\,.  \label{ContraIVa}
\end{align}
However, if we pick the next observables, 
\begin{align}
 O_1& = \bar g_1 \TrN (UX_a ) \TrN (WX_b )\\
 O_2& = \bar g_2 \TrN (PX_b ) \TrN (QX_c )\\
 O_3& = \bar g_3 \TrN (RX_c ) \TrN (SX_a )
\end{align}
the effective vertex for the summand
\begin{align*}\raisebox{-.45\height}{ 
\includegraphics[width=.2\textwidth]{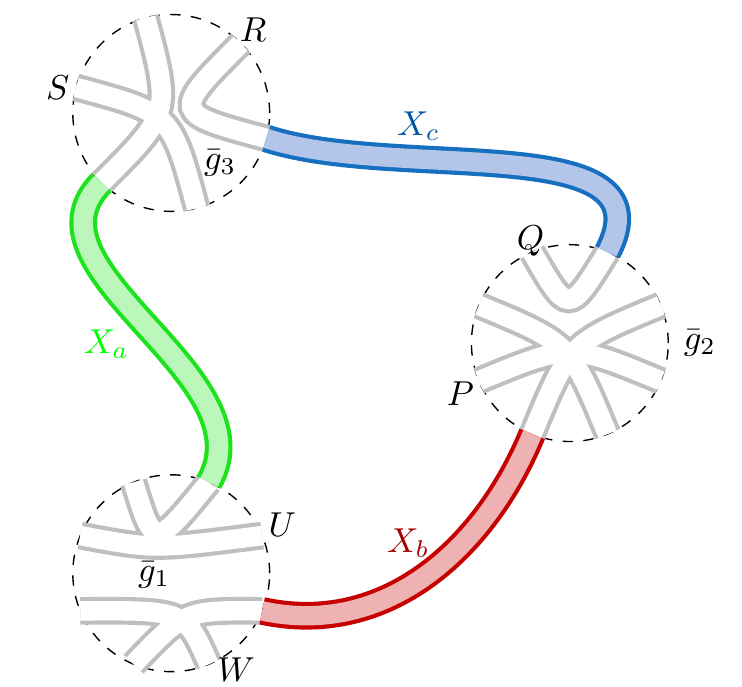}}\subset  
 \Hess_{a,b} O_1 \star 
 \Hess_{b,c} O_2 \star \Hess_{c,a} O_3 \,,
\end{align*} 
must be $\TrN(SU ) \TrN(PW) \TrN(QR)$, 
which is a contradiction with 
\eeqref{ContraIVa}. Thus \eeqref{IVa} is impossible. 
By the same token, one sees that  \eeqref{IVb} leads to 
an effective vertex $\TrN(PR ) \TrN(QU) \TrN(SW)$, which differs from 
$\TrN(SU ) \TrN(PW) \TrN(QR)$. Thus \eeqref{IVb} is not the right product either.
\par 
To rule out further products, we go to fourth degree in the Hessian.
Suppose that \eeqref{IVe} holds. Then 
\begin{align}
[U\totimes W \star P\totimes Q]\star[ T\totimes V \star M\totimes L ]
=&PW\otimes QU \star MV \otimes LT \\
=& \begin{cases}
    PWMV \otimes LTQU \\
    MVPW \otimes QULT
   \end{cases} \label{ContraIVe}
\end{align}
where the last equality lists the possibilities \eeqref{IIId} or \eeqref{IIIf}.
In either case, \eeqref{IVe} holds, then 
the trace of \eeqref{ContraIVe} reads $\TrN(PWMV) \times \TrN(LTQU) $. Again,
if we consider the operators 
\begin{subequations}\label{IVContra4operators}
\begin{align}
 O_1& = \bar g_1 \TrN (UX_a ) \TrN (WX_b ) &
 O_2& = \bar g_2 \TrN (PX_b ) \TrN (QX_c )\\
 O_3& = \bar g_3 \TrN (TX_c ) \TrN (VX_d ) &
  O_4& = \bar g_4 \TrN (MX_c ) \TrN (LX_a )\,.
\end{align}
\end{subequations}
we get from the summand
\begin{align*}
\includegraphicsd{.28}{CounterexampleIV_A_final}\subset  \Hess_{a,b} O_1 \star 
 \Hess_{b,c} O_2 \star \Hess_{c,d} O_3\star \Hess_{d,a} O_4
\end{align*}
  in the fourth power of the Hessian 
  the effective vertex $\TrN(LU) \TrN(WP) \TrN(QT) \TrN(VM)$. Since
  not even the number of traces coincides, \eeqref{IVe} is impossible.
  By the same trace-counting argument, the same operators \eqref{IVContra4operators}
  serve as a counterexample for the products \eeqref{IVf}, \eeqref{IVg} and \eeqref{IVh}.
  This leaves us only with possibilities \eeqref{IVc} and \eeqref{IVd}:
  \begin{align}
   \label{CIVSym}
     (U\totimes W) \star ( P\totimes Q)  & =  \begin{cases}
                                            Q \totimes U \TrN(PW) 
                                            \\ U\totimes Q \TrN(PW)   \end{cases}
  \end{align}

\end{itemize}

To finish the proof, we have to determine which of 
are the correct products,
consider the operators
\begin{align}
O_1&=\bar g_1 \TrN(X_a V X_b T ) \,,\\
O_2&=\bar g_2 \TrN(X_b W X_c U ) \,, \\
O_3&=\bar g_3 \TrN(X_cP  ) \TrN(X_d Q) \,,\\
O_4&=\bar g_4 \TrN(X_d R ) \TrN( X_a S ) \,, 
\end{align}
and the product of their Hessian (entries) 
\begin{align*}
(T\otimes V \star U\otimes  W) \star (P\totimes Q \star R \totimes S)
\end{align*}
This expression is given by 
\begin{align}&=
  \begin{cases} 
      TU\otimes WV \vphantom{\TrN}&  \text{if \eeqref{IVc} holds}\\
        UT\otimes VW\vphantom{\TrN}  &  \text{if \eeqref{IVd} holds}                                     \end{cases} \Bigg\}\star
        \begin{cases} 
      S\otimes P \TrN(QR) &  \text{if \eeqref{IIId} holds}\\
        P\otimes S\TrN(QR)  &  \text{if \eeqref{IIIf} holds}                                     \end{cases}\Bigg\} \nonumber 
\\ &= \begin{cases}
      \TrN(QR) WVSTU\totimes P &  \text{if \eeqref{IVc} \& \eeqref{IIId} hold} \\
      \TrN(QR) WVPTU\totimes S &  \text{if \eeqref{IVc} \& \eeqref{IIIf} hold} \\
      \TrN(QR) VWSUT\totimes P &  \text{if \eeqref{IVd} \& \eeqref{IIId} hold} \\
      \TrN(QR) VWPUT\totimes S &  \text{if \eeqref{IVd} \& \eeqref{IIIf} hold} 
      \end{cases}
\end{align}
However, since the graph 
\begin{align*}
\includegraphicsd{.29}{CounterexampleIII_and_IV_final} \subset  \Hess_{a,b} O_1 \star 
 \Hess_{b,c} O_2 \star \Hess_{c,d} O_3\star \Hess_{d,a} O_4
\end{align*}
has an $\A_n$-trace equal to $\TrN(QR) \times \TrN( WPUTSV)$,
only the last choice is possible. This proves at once 
\eeqref{IVd} and  \eeqref{IIIf}, that is 
\eeqref{FRGalgebraIV} and \eeqref{FRGalgebraIII}, respectively. 
\end{proof*}

\begin{remark} (On well-definedness of the graphical representation.) Example \ref{ex:cuts} 
shows a phenomenon that is more general: 
the asymmetry of the $M_n$-block structure of the nc Hessian matrix,  $\Hess_{a,b} \neq \Hess_{b,a}$. 
Nevertheless, a weaker symmetry persists. Since the swap of $X_a$ and $X_b$ in \eeqref{doubleNCder} 
leads to the exchange $\pi_1$ with $\pi_2$, we  conclude that for any interaction vertex $O$, \begin{align}\Hess_{a,b} O = \widetilde {\Hess_{b,a} O} \quad   \where \quad  \widetilde{(P\otimes Q)}= Q\otimes P\,\,\and\,\, \widetilde{(P\totimes Q)}= Q\totimes P\,, \label{HessNonsymm}\end{align}
for each $P, Q\in \Cn$; the exchange $P\totimes Q \to Q\totimes P$ follows
by Definition \ref{def:HessondoubleTrs}. This makes the present 
construction independent of the choice of ``inner'' and ``outer'' loop,
as well as the orientation of the interaction 
vertices in the one-loop (whether the Hessians 
of $O_1,O_2,\ldots,O_k$ being multiplied means 
we draw $\bar g_1,\ldots,\bar g_k$ clockwise or
anti-clockwise as in Fig. \ref{fig:notideal}) for the following reason. First, observe that using the algebra
obtained in Theorem \ref{thm:main} one can easily derive 
\begin{equation}
\widetilde{\mathfrak a\star \mathfrak b} = \tilde{\mathfrak b} \star \tilde{\mathfrak a } \qquad \text{ for each } \mathfrak a, \mathfrak b \in \A_n\,, \label{inversion} 
\end{equation} 
Let $\mathfrak h :=\Hess_{a_1,a_2} O_1 \star \Hess_{a_2,a_3} O_2\star \cdots \star
\Hess_{a_k,a_1} O_k $, for fixed $a_i=1,\ldots, n$ and for some fixed 
interaction vertices $O_i$ ($i=1,\ldots,k$). 
Then by \eeqref{HessNonsymm} 
\[ {\Hess_{a_1,a_k} O_k} \star \cdots   \star { \Hess_{a_3,a_2} O_2} \star 
  {\Hess_{a_2,a_1} O_1 }=
\widetilde{\Hess_{a_k,a_1} O_k}\star 
\widetilde{ \Hess_{a_2,a_3} O_2}\star \cdots
   \star \widetilde{\Hess_{a_1,a_2} O_1 } \,,\]
   and because of \eeqref{inversion}, this last 
   expression equals $\tilde{\mathfrak h}$. But according to 
   Lemma \ref{FRGtraces}, 
   $\TrA \tilde{\mathfrak{h}}=\TrA \mathfrak{h}$. 
   Thus, if $\mathfrak h $ contributes to the flow, so does $ \tilde{\mathfrak h} $,
and in an equal way, yielding our description independent on the cyclic orientation
we choose for drawing the interaction vertices. What we just proved can also be pictorially
justified:
\begin{equation}
\mathfrak h =
 \includegraphicsd{.45}{Welldefinedness_Orientation} = \tilde{\mathfrak h}
\end{equation}
Notice that these are the only two representations that the cyclic orientation of each vertex $O_i$ allows (meaning, if 
one inverts the order of the interaction vertices).\par 
The final piece of well-definedness is that the product
found here is indeed associative, without using graphs. The purely algebraic proof is routine (and can be found in \cite{FRGEmultimatrix}).

\end{remark}

\begin{example}
Once proven the main statement, we can use the algebra to exemplify a
typical contribution to the renormalization flow in a Hermitian 3-matrix model.
Consider two operators $O_1=\frac{\bar g_1}{2} [\TrN(\frac{A^2}{2})]^2 $ and $O_2
= \bar g_2 \TrN(ABC)$. Suppose we wish to determine the $\bar g_1 \bar g_2$-coefficient of the
rhs of Wetterich equation. We need (essentially) the Hessian of $O_1$ and $[\Hess O_2]^{\star 2}$.
The former has only one non-zero entry,
\begin{align}\label{HessO1Ex}
\Hess_{I,J} O_1 & = \delta_I^J \delta_I^A \bar g_1\{
% \begin{smallmatrix}
\underbrace{\TrN(A^2/2) [1_N\otimes 1_N]}_{ \includegraphicsds{.031}{AboxtimesA} }+ \underbrace{A\totimes A}_{\includegraphicsds{.031}{TrA2_timesId}} \}
% & = \delta_I^J \delta_I^A   [+  ]
% \quad (I,J\in \{A,B,C\})
\,,
% \end{smallmatrix}\label{HessO1Ex}
\end{align}
where a ``filled ribbon'' means that that half-edge is contracted in the one-loop graph,
and an ``empty ribbon'' that it is not (and therefore contributes to the final
effective vertex). We also have
\begin{align}
\Hess O_2=  \bar g_2
 \begin{bmatrix}
0 & C\otimes  1_N  & B\otimes 1_N \\
1_N\otimes C& 0 & A\otimes 1_N \\
1_N \otimes B &1_N \otimes A & 0
 \end{bmatrix}
 \end{align} getting   \begin{align}\label{HessO2Ex}
 [\Hess O_2]^{\star 2}=\bar g_2^2\begin{bmatrix}
C\otimes C + B\otimes B & \color{gray} B\otimes A & \color{gray}C\otimes A  \\
\color{gray} A\otimes B & \color{gray}A\otimes A + C\otimes C &\color{gray} C\otimes B \\
\color{gray}A\otimes C &\color{gray} B\otimes C &\color{gray} B\otimes B + A\otimes A
\end{bmatrix}
% = \bar g_2^2 \begin{bmatrix}
%                \includegraphicsds{.04}{CtensorC} +
%                \includegraphicsds{.04}{BtensorB}  & *  & * \\ * & * & * \\ * & * & *
%              \end{bmatrix}
             \,.
\end{align}
Only the black-colored entry will contribute, since $\Hess O_1$'s (11)-entry is the only non-vanishing.
In the (11)-entry, the term 
\[ C\otimes C \text{ corresponds to the
graph $ \includegraphicsds{.08}{CtensorC}$ and $B\otimes B$
to  $ \includegraphicsds{.08}{BtensorB}$} \] (the horizontal green
edges still to be contracted in the loop with those in Hessian \eqref{HessO1Ex} that are also filled).
Finally, we extract the coefficients \begin{align*}
[\bar g_1\bar g_2]  \STr \{ \Hess O_1 [\Hess O_2]^{\star 2}\} &= \TrA \big\{ [
\TrN(A^2/2) [1_N\otimes 1_N]+A\totimes A ] \star [C\otimes C + B\otimes B]
\big\} \\
&=  \TrA \big\{
\TrN(A^2/2) (C\otimes C + B\otimes B ) + A\totimes CAC + A\totimes BAB
\big\} \\ &= \TrN(A^2/2) \times [\TrN^2 C +\TrN^2 B   ] + \TrN (ACAC+ABAB )\,,
         \end{align*}
         which are effective vertices of the four  one-loop graphs that can be formed with the
         contractions of (the filled ribbon half-edges of)   \[ \text{ any of $ \Big\{\,\, \includegraphicsds{.07}{CtensorC} \,\,, \,\,\includegraphicsds{.07}{BtensorB}\,\, \Big\}$
         with any of $\Big\{ \,\, \includegraphicsds{.045}{AboxtimesA}\,\,, \,\,\includegraphicsds{.045}{TrA2_timesId}\,\,\Big\} $}\,.\]
 Of course, this is a toy-example: the algebraic structure pays off with
 higher-power interactions and/or higher number of matrices (whose flow becomes
 unaccessible by traditional methods and can hardly be cross-checked using graphs, due to the large amount of these; cf. the supplementary material of \cite{FRGEmultimatrix}).
 \end{example}

\section{Conclusion}
The algebraic structure of functional renormalization of Hermitian
$n$-matrix models with interactions containing several traces has been
addressed. Under the assumption that 
it is possible to compute the flow in 
 terms of $\mathrm U(N)$-invariant operators,
the present result completely describes the regulator-independent part of the flow.
This paper complements\footnote{To fully implement the flow for Dirac ensembles, which 
is the aim of \cite{FRGEmultimatrix} either operators
with broken unitary symmetry could be introduced. Another 
perspective is to implement a Ward-constrained flow
\cite{WardFlow_Review}} 
\cite{FRGEmultimatrix}. There, 
for multi-matrix models with multi-traces, Wetterich equation 
was proven, and in the middle of the proof one were able to read 
off the algebraic structure, 
\eqref{FRGalgebra}. Computations of $\beta$-functions using \eqref{FRGalgebra} revealed a one-loop structure in \cite{FRGEmultimatrix}.
Here we showed the converse: the one-loop structure requires the algebra of functional renormalization (i.e. the structure that makes the rhs of Wetterich equation computable for such matrix models) to be  \eeqref{FRGalgebra}, showing
its uniqueness. 
\par
As a final perspective, 
the present results can be useful to connect
different renormalization theories, e.g. \cite{Gurau:2010ts,TanasaVignes}. 
Also, Figure \ref{fig:ideal} is strikingly reminiscent of the Connes-Kreimer
residue defining the coproduct (of their renormalization Hopf-algebra \cite{ConnesKreimer}).
Between those, the algebraic language could build a shorter bridge
than graph theory---all the more, algebra 
can be coded more directly than graphs.

\small
\begin{acknowledgements} The two anonymous referees that received this article
in \textit{Letters in Mathematical Physics} are acknowledged for the
  extraordinarily careful reading and for useful comments.
    This work was mainly supported by the European Research Council (ERC) under
the European Union’s 
Horizon 2020 research and innovation program (grant agreement
No818066) and also by the
Deutsche Forschungsgemeinschaft (DFG, German Research Foundation)
under Germany’s 
Excellence Strategy EXC-2181/1-390900948 (the Heidelberg \textsc{Structures}
Cluster of Excellence). At the beginning, this project was 
funded by the TEAM programme of the
  Foundation for Polish Science co-financed by the European Union
  under the European Regional Development Fund
  (POIR.04.04.00-00-5C55/17-00).
\end{acknowledgements}

 \fontsize{10.4}{13.35}\selectfont    
%  \normalsize
\appendix
\section{On $\Gamma_N$ and $ R_N$}\label{app:RN}

Although our aim in this paper is rather algebraic,
we sketch how the FRG works, for sake of completeness.
Starting from the bare action  $S[\Phi]$ 
on $S:\H_N^n \to \re$, in order to construct the  effective action $\Gamma_N[\mathbb X]$
one first regulates the connected
partition function  $\log \mathcal Z [\mathbb{J}] = \log \int_{\H_N^n} \exp[-S[\Phi] +\sum_{c=1}^n \Tr( J_c \Phi_c)]  \dif   \Phi\Lebesgue$ 
(where $\mathbb{J}=(J_c)\in \H_N^n$)
by adding a mass-like term, $\frac12 r_{c,N}  \Phi^2_c$, to each matrix:
\begin{align*} \mathcal W_N [\mathbb{J}] = \log \int_{\H_N^n} \exp\bigg\{-S[\Phi] -\sum_{c=1}^n  \TrN\Big[\frac12 r_{c,N}  \Phi^2_c+ J_c \Phi_a\Big]\bigg\}  \dif \Phi\Lebesgue\end{align*} where $r_{c,N}$ is the next \textit{infrared regulator}, given in terms of the Heaviside or indicator function $\Theta_{\mathbb D_N}$ on the disk $\mathbb D_N=\{ (a,b)\in \N^2 \mid  a^2+b^2\leq N^2\}$ by  \vspace{-.5cm}
\begin{align}
r_{c,N} (a,b)=  Z_c\cdot    \bigg[\frac{N^2}{ a^2+b^2} -1 \bigg] \cdot \Theta_{\mathbb D_N}(a,b) \,, \hspace{.41cm}\text{or plotted: }
\includegraphicsd{.294}{Regulator_plot_FRGE}
\end{align}
with $Z_c$ the ``wave function renormalization'' of the matrix $X_c$  (see \cite{FRGEmultimatrix} for details on the dependence on the ``classical fields'' $\mathbb X=(X_1,\ldots,X_n):=(\partial_{J_1} \mathcal W,\ldots,\partial_{J_n} \mathcal W) \in \H_N^n $).  
Other  regulators are not discussed here, but another regulator should comply with the following 
limits: in order for $r_{c,N}$ to integrate out only the ``higher modes'', i.e. matrix entries $\big\{(X_c)_{i,j}\big\}_{i,j\geq N; c=1,\ldots,n}$ above the \textit{energy scale} $N$, one imposes that $r_{c,N}(a,b)=0 $ if $a>N$ or $b>N$, which explains the presence of $\Theta_{\mathbb D_N}$ in this particular choice. The additional condition $r_{c,N}>0$
creates a mass-like term for the low-energy modes $\big\{ (X_c)_{i,j}\big\}_{0<i,j< N; c=1,\ldots,n}$
that protects these from being integrated out. 
One wants, moreover, to recover the bare action via sadle point approximation as $N\to\infty$ and thus   
$r_{c,N} \to \infty$ is necessary in that limit. The interpolating  \textit{effective action}
is then constructed by taking the Legendre transform, namely $\Gamma_N[\mathbb X]= \sup_{J_1,\ldots,J_n} \sum_{c=1}^n\big\{ \TrN(X_c J_c) -\mathcal W_N[J_1,\ldots,J_n] -\frac12 \TrN(r_{c,N} X_c^2)\big\}$.
The regulator $R_N$ that appears in the main text is
\[R_N  = \begin{bmatrix}
      r_{1,N} 1_N\otimes 1_N  &  0&   \ldots & 0 \\
      0 &  r_{2,N} 1_N\otimes 1_N & \ldots & 0 \\ 
      \vdots & \ddots & & \vdots \\[1ex]
       0& 0& \ldots & r_{n,N} 1_N\otimes 1_N
      \end{bmatrix}.\] 
 
\providecommand{\href}[2]{#2}\begingroup\raggedright\endgroup 
 

\begin{thebibliography}{10}

\bibitem{Berges:2000ew}
J.~Berges, N.~Tetradis and C.~Wetterich, \emph{{Nonperturbative renormalization
  flow in quantum field theory and statistical physics}},
  \href{https://doi.org/10.1016/S0370-1573(01)00098-9}{\emph{Phys. Rept.}
  {\bfseries 363} (2002) 223}
  [\href{https://arxiv.org/abs/hep-ph/0005122}{{\ttfamily hep-ph/0005122}}].

\bibitem{Wetterich}
C.~Wetterich, \emph{{Exact evolution equation for the effective potential}},
  \href{https://doi.org/10.1016/0370-2693(93)90726-X}{\emph{Phys. Lett. B}
  {\bfseries 301} (1993) 90}
  [\href{https://arxiv.org/abs/1710.05815}{{\ttfamily 1710.05815}}].

\bibitem{FRGEmultimatrix}
C.~I. P{\'e}r{ez-S\'anchez}, \emph{{On Multimatrix Models Motivated by Random
  Noncommutative Geometry I: The Functional Renormalization Group as a Flow in
  the Free Algebra}},
  \href{https://doi.org/10.1007/s00023-021-01025-4}{\emph{Annales Henri
  Poincar{\'e}} {\bfseries 22} (2021) 3095-3148 }
  [\href{https://arxiv.org/abs/2007.10914}{{\ttfamily 2007.10914}}].

\bibitem{Eynard:2005iq}
B.~Eynard and N.~Orantin, \emph{{Mixed correlation functions in the 2-matrix
  model, and the Bethe ansatz}},
  \href{https://doi.org/10.1088/1126-6708/2005/08/028}{\emph{J. High Energ. Phys.} {\bfseries
  08} (2005) 028} [\href{https://arxiv.org/abs/hep-th/0504029}{{\ttfamily
  hep-th/0504029}}].

\bibitem{KazakovABAB}
V.~A. Kazakov and P.~Zinn-Justin, \emph{{Two matrix model with ABAB
  interaction}},
  \href{https://doi.org/10.1016/S0550-3213(99)00015-2}{\emph{Nucl. Phys. B}
  {\bfseries 546} (1999) 647}
  [\href{https://arxiv.org/abs/hep-th/9808043}{{\ttfamily hep-th/9808043}}].

\bibitem{GuionnetFreeAn}
A.~{Guionnet}, \emph{{Free analysis and random matrices.}}, {\emph{{Jpn. J.
  Math. (3)}} {\bfseries 11} (2016) 33}.

\bibitem{ConnesMarcolli}
A.~Connes and M.~Marcolli, \emph{{Noncommutative Geometry, Quantum Fields and
  Motives}}. {American Mathematical Society}, 2007.

\bibitem{GlaserSternZwei}
L.~Glaser and A.~B. Stern, \emph{{Reconstructing manifolds from truncations of
  spectral triples}},
  \href{https://doi.org/10.1016/j.geomphys.2020.103921}{\emph{J. Geom. Phys.}
  {\bfseries 159} (2021) 103921}
  [\href{https://arxiv.org/abs/1912.09227}{{\ttfamily 1912.09227}}].

\bibitem{MultimatrixYMH}
C.~I. Perez-Sanchez, \emph{{On multimatrix models motivated by random
  noncommutative geometry II: A Yang-Mills--Higgs matrix model}},
    \href{https://doi.org/10.1007/s00023-021-01138-w}{\emph{Annales Henri
  Poincar{\'e}} {\bfseries 23} 1979-2023 (2022)}
  [\href{https://arxiv.org/abs/2105.01025}{{\ttfamily 2105.01025}}].

\bibitem{Khalkhali:2021eso}
M.~Khalkhali and N.~Pagliaroli, \emph{{Spectral Statistics of Dirac
  Ensembles}},
    \href{https://doi.org/10.1063/5.0078267}{\textit{J. Math. Phys.} 63, 053504 (2022)}
   [\href{https://arxiv.org/abs/2109.12741}{{\ttfamily
  2109.12741}}].

\bibitem{BarrettGlaser}
J.~W. Barrett and L.~Glaser, \emph{{Monte Carlo simulations of random
  non-commutative geometries}},
  \href{https://doi.org/10.1088/1751-8113/49/24/245001}{\emph{J. Phys.}
  {\bfseries A49} (2016) 245001}
  [\href{https://arxiv.org/abs/1510.01377}{{\ttfamily 1510.01377}}].

\bibitem{SAfuzzy}
C.~I. P{\'e}r{ez-S\'anchez}, \emph{{Computing the spectral action for fuzzy
  geometries: from random noncommutative geometry to bi-tracial multimatrix
  models}},  [\href{https://arxiv.org/abs/1912.13288}{{\ttfamily 1912.13288}}].

\bibitem{BarrettMatrix}
J.~W. Barrett, \emph{{Matrix geometries and fuzzy spaces as finite spectral
  triples}}, \href{https://doi.org/10.1063/1.4927224}{\emph{J. Math. Phys.}
  {\bfseries 56} (2015) 082301}
  [\href{https://arxiv.org/abs/1502.05383}{{\ttfamily 1502.05383}}].

\bibitem{Guionnet:2010qg}
A.~Guionnet, V.~F.~R. Jones, D.~Shlyakhtenko and P.~Zinn-Justin, \emph{{Loop
  models, random matrices and planar algebras}},
  \href{https://doi.org/10.1007/s00220-012-1573-1}{\emph{Commun. Math. Phys.}
  {\bfseries 316} (2012) 45} [\href{https://arxiv.org/abs/1012.0619}{{\ttfamily
  1012.0619}}].

\bibitem{PlanarDiags}
E.~Brezin, C.~Itzykson, G.~Parisi and J.~B. Zuber, \emph{{Planar Diagrams}},
  \href{https://doi.org/10.1007/BF01614153}{\emph{Commun. Math. Phys.}
  {\bfseries 59} (1978) 35}.

\bibitem{EynardCounting}
B.~Eynard, \emph{{Counting Surfaces}}, vol.~70 of \emph{Progress in
  Mathematical Physics}. Springer, 2016,
  \href{https://doi.org/10.1007/978-3-7643-8797-6}{10.1007/978-3-7643-8797-6}.

\bibitem{stuffed}
{Borot, Ga\"etan}, \emph{{Formal multidimensional integrals, stuffed maps, and
  topological recursion}}, \href{https://doi.org/10.4171/AIHPD/7}{\emph{Ann.
  Inst. Henri Poincar\'e Comb. Phys. Interact.} {\bfseries 1} (2014) }
  [\href{https://arxiv.org/abs/1307.4957}{{\ttfamily 1307.4957}}].

\bibitem{Klebanov:1994pv}
I.~R. Klebanov, \emph{{Touching random surfaces and Liouville gravity}},
  \href{https://doi.org/10.1103/PhysRevD.51.1836}{\emph{Phys. Rev. D}
  {\bfseries 51} (1995) 1836}
  [\href{https://arxiv.org/abs/hep-th/9407167}{{\ttfamily hep-th/9407167}}].

\bibitem{DasMultitrace}
S.~R. Das, A.~Dhar, A.~M. Sengupta and S.~R. Wadia, \emph{{New Critical
  Behavior in $d=0$ Large $N$ Matrix Models}},
  \href{https://doi.org/10.1142/S0217732390001165}{\emph{Mod. Phys. Lett. A}
  {\bfseries 5} (1990) 1041}.

\bibitem{Ambjorn:2001br}
J.~Ambj\o{}rn, J.~Jurkiewicz, R.~Loll and G.~Vernizzi, \emph{{Lorentzian 3-D
  gravity with wormholes via matrix models}},
  \href{https://doi.org/10.1088/1126-6708/2001/09/022}{\emph{J. High Energ. Phys.} {\bfseries
  09} (2001) 022} [\href{https://arxiv.org/abs/hep-th/0106082}{{\ttfamily
  hep-th/0106082}}].

\bibitem{Witten:2001ua}
E.~Witten, \emph{{Multitrace operators, boundary conditions, and AdS/CFT
  correspondence}},  [\href{https://arxiv.org/abs/hep-th/0112258}{{\ttfamily
  hep-th/0112258}}].

\bibitem{Rota}
G.-C. Rota, B.~Sagan and P.~R. Stein, \emph{A cyclic derivative in
  noncommutative algebra},
  \href{https://doi.org/https://doi.org/10.1016/0021-8693(80)90133-7}{\emph{Journal
  of Algebra} {\bfseries 64} (1980) 54 }.

\bibitem{Voi_gradient}
D.~Voiculescu, \emph{A note on cyclic gradients}, {\emph{Indiana University
  Mathematics Journal} {\bfseries 49} (2000) 837}.

\bibitem{komentarzABAB}
C.~I. Perez-Sanchez, \emph{{Comment on \textquotedblleft{}The phase diagram of
  the multi-matrix model with ABAB-interaction from functional
  renormalization\textquotedblright{}}},
  \href{https://doi.org/10.1007/JHEP07(2021)042}{
   \emph{J. High Energ. Phys.} 2021, {\bfseries 42} (2021)} [\href{https://arxiv.org/abs/2102.06999}{{\ttfamily 2102.06999}}].

\bibitem{Brezin:1992yc}
E.~Brezin and J.~Zinn-Justin, \emph{{Renormalization group approach to matrix
  models}}, \href{https://doi.org/10.1016/0370-2693(92)91953-7}{\emph{Phys.
  Lett.} {\bfseries B288} (1992) 54}
  [\href{https://arxiv.org/abs/hep-th/9206035}{{\ttfamily hep-th/9206035}}].

\bibitem{EichhornKoslowskiFRG}
A.~Eichhorn and T.~Koslowski, \emph{{Continuum limit in matrix models for
  quantum gravity from the Functional Renormalization Group}},
  \href{https://doi.org/10.1103/PhysRevD.88.084016}{\emph{Phys. Rev.}
  {\bfseries D88} (2013) 084016}
  [\href{https://arxiv.org/abs/1309.1690}{{\ttfamily 1309.1690}}].

\bibitem{krajewskireiko}
T.~Krajewski and R.~Toriumi, \emph{{Exact Renormalisation Group Equations and
  Loop Equations for Tensor Models}},
  \href{https://doi.org/10.3842/SIGMA.2016.068}{\emph{SIGMA} {\bfseries 12}
  (2016) 068} [\href{https://arxiv.org/abs/1603.00172}{{\ttfamily
  1603.00172}}].

\bibitem{FPinverse}
D.~Benedetti, K.~Groh, P.~F. Machado and F.~Saueressig, \emph{{The Universal RG
  Machine}}, \href{https://doi.org/10.1007/JHEP06(2011)079}{\emph{J. High Energ. Phys.}
  {\bfseries 06} (2011) 079} [\href{https://arxiv.org/abs/1012.3081}{{\ttfamily
  1012.3081}}].

\bibitem{WardFlow_Review}
E.~Baloitcha, V.~Lahoche and D.~Ousmane~Samary, \emph{{Flowing in discrete
  gravity models and Ward identities: A review}},
  \href{https://doi.org/10.1140/epjp/s13360-021-01823-z}{\emph{Eur. Phys. J.
  Plus} {\bfseries 136} (2020) }
  [\href{https://arxiv.org/abs/2001.02631}{{\ttfamily 2001.02631}}].

\bibitem{Gurau:2010ts}
R.~Gur\u{a}u, \emph{{A Diagrammatic Equation for Oriented Planar Graphs}},
  \href{https://doi.org/10.1016/j.nuclphysb.2010.06.022}{\emph{Nucl. Phys. B}
  {\bfseries 839} (2010) 580}
  [\href{https://arxiv.org/abs/1003.2187}{{\ttfamily 1003.2187}}].

\bibitem{TanasaVignes}
A.~Tanas\u{a} and F.~Vignes-Tourneret, \emph{{Hopf algebra of non-commutative
  field theory}}, \href{https://doi.org/10.4171/JNCG/17}{\emph{J. Noncommutat.
  Geom.} {\bfseries 2} (2008) 125}
  [\href{https://arxiv.org/abs/0707.4143}{{\ttfamily 0707.4143}}].

\bibitem{ConnesKreimer}
A.~Connes and D.~Kreimer, \emph{{Renormalization in quantum field theory and
  the Riemann-Hilbert problem. 1. The Hopf algebra structure of graphs and the
  main theorem}}, \href{https://doi.org/10.1007/s002200050779}{\emph{Commun.
  Math. Phys.} {\bfseries 210} (2000) 249}
  [\href{https://arxiv.org/abs/hep-th/9912092}{{\ttfamily hep-th/9912092}}].

\end{thebibliography}
\end{document}